\newcommand{\brac}[1]{\ensuremath{\left( #1 \right)}}
\def\R{\ensuremath{I\!\!R}}
\DeclareMathOperator{\im}{im}
\DeclareMathOperator{\sign}{sign}
\DeclareMathOperator{\inte}{int}
\newtheorem{theorem}{Theorem}
\newtheorem{definition}{Definition}
\newtheorem{corollary}{Corollary}
\newtheorem*{example*}{Example}
\newtheorem{lemma}{Lemma}
\newtheorem{remark}{Remark}
\newtheorem{assumption}{Assumption}
\definecolor{dblue}{rgb}{0.0,0.0,0.68}
\definecolor{dred}{rgb}{0.75,0,0}
\def\dred{\textcolor{dred}}
\definecolor{dgreen}{rgb}{0,0.8,0}
\title{Graph-theoretic analysis of multistationarity using  degree theory}
\author{Carsten Conradi \thanks{ Max Planck Institute Dynamics of Complex Technical Systems, Magdeburg, Germany. Email: \texttt{conradi@mpi-magdeburg.mpg.de}} \ and 
Maya Mincheva\thanks{Corresponding author. Department of Mathematical Sciences, 
Northern Illinois University, DeKalb,  IL 60115, USA.
E-mail: \texttt{mincheva@math.niu.edu}}}
\begin{document}
\maketitle

\begin{abstract}
Biochemical mechanisms  with mass action kinetics are often modeled by  systems of  polynomial  differential equations (DE).
Determining  directly if the DE system has multiple equilibria (multistationarity) is difficult for realistic systems, since they are large, nonlinear and contain many unknown parameters. 
Mass action biochemical mechanisms can be represented by a directed bipartite graph with species and reaction nodes.
Graph-theoretic methods can  then 
be used to assess the potential of a given biochemical mechanism for
multistationarity by identifying structures in the bipartite graph 
referred to as critical fragments. In this article we present a
graph-theoretic method for conservative biochemical mechanisms
characterized by bounded species concentrations, which makes the use of degree 
theory arguments  possible. We illustrate the results with an example
of a MAPK network.

\textbf{Keywords.} Biochemical mechanisms, mass-action kinetics, multistationarity, bipartite graph,  MAPK network.
\end{abstract}

\section{Introduction}\label{intro}
Biochemical mechanisms    of  chemical species  and elementary reactions  are often modeled by  differential equations (DE) systems with the species concentrations as variables.
Multistability, the existence of multiple  stable positive equilibria
(for some choice of parameter values) is   ubiquitous in models of
biochemical mechanisms,  such as  cell decision
\cite{markevich2004signaling,ozbudak2004multistability}. And 
multistationarity,  the existence of multiple  positive equilibria  is
necessary for multistability or a biological switch, a term used in
the biological literature.

The  models  in this work  will be taken with mass action kinetics resulting in a polynomial right-hand side  of the DE system.
The DE system models of the biochemical mechanisms  of interest are
typically high-dimensional, nonlinear and contain many unknown
parameters (rate constants and total concentrations). Thus determining
parameter values such that multiple equilibria   can  be found  by
solving numerically  large  nonlinear polynomial systems is difficult,
if not impossible.  On the other hand solving a nonlinear polynomial
system with unknown coefficients directly  using methods from algebraic
geometry has its limitations \cite{JoshiShiu2012}. Therefore other
methods and approaches  such as graph-theoretic 
are being  developed to answer the question of
the existence of multistationarity more easily.

A biochemical  mechanism  with mass action kinetics  can be
represented as a directed bipartite  graph, which is a graph with two
non-intersecting  sets of nodes representing species and reactions,  and
directed edges starting at a species (reaction) node and ending at a
reaction (species) node. Graph-theoretic methods can be used  to
identify  structures referred to as critical fragments  that are
necessary   for the existence of   multistationarity
\cite{Mincheva2007}.

Graph-theoretic methods have been  used to determine the potential of various biochemical mechanisms for multistationarity \cite{BanajiCraciun2010,Craciun2006,Mincheva2007,VolpertIvanova}.
Many of these methods use the one-to-one correspondence  between structures in the graph (fragments \cite{Mincheva2007,VolpertIvanova}  or cycle structures \cite{Craciun2006})
and the summands in the determinant of the Jacobian of the right-hand side  of  the DE system. However, many models 
have conservation relations with positive coefficients  of the species concentrations.
The existence of conservation relations results in a non-full rank  Jacobian.  This leads to  considering a coefficient of the characteristic polynomial of the Jacobian different from the constant coefficient and its sign when studying multistationarity. Here we study  
 conservative biochemical mechanisms  where all species concentrations participate in at least one conservation relation. This means that all species concentrations are bounded from above and degree theory \cite{Deimling2010} can be used to study the number of equilibria of the DE  model.
So far degree theory has been used to study multistationarity in biochemical mechanism  models, for example,  in \cite{ConradiMincheva2014,CraciunHelton2008,DeSontag2007} and  graph-theoretic methods have been developed in \cite{BanajiCraciun2010,Craciun2006,Mincheva2007,VolpertIvanova}. Here we   
combine both approaches to develop  a graph-theoretic  method for multistationarity in  a conservative  biochemical mechanism DE model.

This article is organized as follows. Sec.~\ref{prel} provides an
introduction to  conservative biochemical mechanisms with mass action
kinetics and their properties. 
 In Sec.~\ref{sec:dyn_constr} we discuss consequences
  of the well-known fact that solutions of the DE systems under study
  are confined to affine linear subspaces defined by these conservation
  relations.
In Sec.~\ref{sec:jac_ma} the Jacobian of the original  DE system, its
parametrization and  the determinant of  the Jacobian on the level
sets  is given. In Sec.~\ref{sec:degree_g} the degree of a nonlinear
function   and some of its properties related to  a DE system's
right-hand side  on a given  level set  is introduced. In
Sec.~\ref{sec:bip_gr} the bipartite graph of a biochemical mechanism
with mass action kinetics is introduced. The main result in
Sec.~\ref{sec:multi} (Theorem ~\ref{thm2} and Corollary
\ref{cor:cor4}) gives a necessary condition for multistationarity for
conservative biochemical mechanism models. An example of a MAPK
network  model studied for multistationarity  in \cite{fein-043} is
presented in the same section.

\section{Preliminaries}\label{prel}
A (bio)chemical mechanism with 
$n$ species $A_i$, $i=1,\ldots, n$, and $m$ elementary
reactions is represented as 
\begin{equation}\label{cr}
\sum_{i=1}^{n} \alpha_{ij} A_{i} \xrightarrow{k_{j}} \sum_{i=1}^{n}   
\beta_{ij} A_{i} ,\quad j=1 \dotsc m, 
\end{equation}
where   $k_j>0$, $j=1, \ldots , m$ are the  {\it rate constants}.
The constants
$\alpha_{ij} \geq 0$ and $\beta_{ij} \geq 0$ are small integers called 
{\it stoichiometric coefficients}  that account for the
number of molecules of species $A_i$ participating in the $j^\mathrm{th}$ elementary 
reaction in (\ref{cr}). An example of a  chemical mechanism is given below 
\begin{equation}
        \begin{array}{lrcl}
          &  A_2 + A_3 &\xrightarrow{k_1} & 2 A_1 , \\
           & A_3 & \xrightarrow{k_2} & A_1, \\ 
            & A_1 & \xrightarrow{k_3} & A_3, \\ 
           & A_2  & \xrightarrow{k_4} &  A_1 \\ 
             & A_1  & \xrightarrow{k_5} &  A_2. \\ 
        \end{array}
        \label{eq:br_cyc}
      \end{equation}

\begin{assumption}\label{ass1}
A true reaction is a reaction different from an
  inflow reaction $A_i \to \emptyset$ or an outflow reaction
  $\emptyset \to A_i$. An autocatalytic reaction is a  reaction  of
  the form $s_1A_i +...\to s_2 A_i +\ldots $ where $0<s_1<s_2$. We assume
  that every species in (\ref{cr}) is consumed and produced  in at
  least one true non-autocatalytic elementary
  reaction.
\end{assumption} 
The chemical mechanism in (\ref{eq:br_cyc}) satisfies the above assumption.

We will denote by $x=(x_1, \ldots, x_n)$ the vector of concentrations
$x_i$ of species $A_i$ and by $k=(k_1, \ldots, k_m)$ the vector of
rate constants. If for $y \in \mathbb{R}^n$,  $y_i\geq 0$ ($y_i>0$)
for all $i$ we will write $y\geq 0$  $(y>0)$. Since each $x_i \geq 0$
as  a concentration,  we have   $x\geq 0 $.
Similarly $k>0$ since each rate constant $k_j>0$.

If  {\it mass action kinetics}  is used for the mechanism (\ref{cr}), then the corresponding  {\it rate functions} are 
\begin{equation}\label{rf}
v_j (k,x) = k_j x_{1}^{\alpha_{1j}} \ldots  x_{n}^{\alpha_{nj}},  \quad \quad j =1, \ldots , m.
\end{equation}
The vector of rate functions will be denoted as  $v(k,x)  =(v_1 (k,x), \ldots, v_m (k,x))$ where $v\geq 0$.

The differential equation (DE) model of a mass-action mechanism such as  (\ref{cr}) can  be  written as 
\begin{equation}\label{crm}
\dot{x} (t) = N v(k,x) =f(k,x) =f(v,x)
\end{equation}
where $$N_{ij}= \beta_{ij} -\alpha_{ij}$$ 
are the entries of  the stoichiometric matrix $N$ with dimension  $(n
\times m)$ and  $v(k,x)$ is the vector of rate functions (\ref{rf}). 
  In what follows we will at times interpret the right hand side of
  (\ref{crm}) as a function of the rate constants $k$ and the species
  concentrations $x$ and at other times as a function of the reaction
  rates $v$ and the concentrations $x$, depending on the situation.

  For the system (\ref{eq:br_cyc}) we obtain the stoichiometric matrix
  \begin{displaymath}
    N = \left[
      \begin{array}{ccccc}
        \phantom{-}2 & \phantom{-}1 & -1 & \phantom{-}1 & -1 \\
        -1 & \phantom{-}0 & \phantom{-}0 & -1 & \phantom{-}1 \\
        -1 & -1 & \phantom{-}1 & \phantom{-}0 & \phantom{-}0 
      \end{array}
    \right] \text{ and the rate functions }  v(k,x) =
    (k_1x_2x_3,k_2x_3,k_3x_1,k_4x_2,k_5x_1)^T\ .
  \end{displaymath}

The equations of the system (\ref{crm}) can be written componentwise as 
 \begin{equation}\label{eq:crm_com}
 \dot{x}_i (t) =\sum_{j=1}^m N_{ij} v_j (k,x) , \quad i=1, \ldots , n.
 \end{equation}
The model equations of  the mechanism  (\ref{eq:br_cyc}) are  given below
\begin{equation}\begin{aligned} \label{eq:de_cyc}
\dot{x}_1 &= 2k_1x_2x_3  +k_2 x_3  - k_3 x_1 +k_4x_2 -k_5x_1 =2v_1+v_2-v_3+v_4-v_5,\\
\dot{x}_{2} &=  -k_1 x_2 x_3  -k_4x_2 +k_5 x_1 = -v_1-v_4+v_5,\\
\dot{x}_{3} &= -k_1 x_2 x_3 - k_2 x_{3}  +k_3 x_1= -v_1-v_2+v_3. \\  
\end{aligned}\end{equation}

Initially $x(0) =x_0 \geq 0$, and we will denote a solution $x(t)$ of
(\ref{crm}) with initial condition $x_0$ as $x(t,x_0)$. 

\section{The dynamics on the  level sets $\omega_{c_0}$}\label{sec:dyn_constr}

Let the stoichiometric matrix $N$ have  rank $r$. 
 Suppose that  at least one  solution $\lambda =(\lambda_1,\ldots ,\lambda_n) $ of the system
 \begin{equation}\label{eq:zero-stoic}
  \sum_{i=1}^n  \lambda_i N_{ij} =0, \quad j=1,\ldots, m
  \end{equation}
 exists. Then we have $(n-r)$ mass conservation laws
 \begin{equation}\label{eq:mcr} 
 \sum_{i=1}^n  \lambda_i x_i=  \sum_{i=1}^n  \lambda_i x_{i} (0).
 \end{equation}
If $\lambda_i>0$ for all $i$ in at least one solution $\lambda$,
then by (\ref{eq:mcr}) it follows that all species concentrations
$x_i$ are {\it conserved}, i.e.,  $0\leq x_i (t) \leq M$ for all $i$
where $M >0$. A biochemical mechanism  (\ref{cr}) with  mass-conserved
species concentrations  will be called  {\it conservative biochemical
  mechanism} \cite{CraciunHelton2008}.

\begin{assumption}\label{ass2}
  Here we study only biochemical mechanisms where the system
    \begin{displaymath}
      \lambda^T\, N = 0, \lambda>0
    \end{displaymath}
    has a solution. As outlined above, in this situation all species
    concentrations are bounded.
  
 \end{assumption}

The rank of the stoichiometry matrix $N$ of the system
(\ref{eq:de_cyc}) equals $2$. 
  The left kernel of N is spanned by the vector $\lambda^T =
  (1,1,1)$. Hence the network is conservative and there exists one
  conservation relation, $x_1+x_2+x_3=const.$

We can rewrite (\ref{eq:mcr}) in a matrix form as
 \begin{equation}\label{eq:crel}
  W^T\, x(t,x_0) \equiv W^T\, x_0 =c_0
\end{equation}
where $W$ is a full rank $n\times (n-r)$ matrix  whose columns span
$\ker \brac{N^T}$.

  In what follows we interpret the entries of the $(n-r)$ dimensional vector $c_0$ as
  additional parameters and study the dynamics of the system
  (\ref{crm}) on the level sets

\begin{equation}
  \label{eq:def_om}
  \omega_{c_0} = \left\{  x \geq 0 \; | \; W^T x = c_0 \right\}.
\end{equation}
  This is motivated by the observation that the sets $\omega_{c_0}$
  are invariant under the dynamics of (\ref{crm}).

\begin{lemma}[$\omega_{c_0}$ convex, compact and forward invariant]
  \label{lem:Om_forward_invariant}
  The set $\omega_{c_0}$ is convex, compact and forward invariant. 
\end{lemma}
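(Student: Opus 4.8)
The plan is to verify the three properties separately, treating convexity and compactness with elementary set-theoretic and linear-algebraic arguments and reserving the dynamical content for forward invariance. For convexity I would observe that the set $\omega_{c_0}$ in (\ref{eq:def_om}) is the intersection of the nonnegative orthant $\{x\geq 0\}$ with the affine subspace $\{x : W^T x = c_0\}$. Both are convex --- the orthant as an intersection of half-spaces, the affine subspace as a coset of $\ker(W^T)$ --- and an intersection of convex sets is convex.

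For compactness I would appeal to the Heine--Borel theorem and establish closedness and boundedness. Closedness is immediate: $\{x\geq 0\}$ is closed, and $\{x : W^T x = c_0\}$ is the preimage of a point under the continuous map $x\mapsto W^T x$, hence closed, so their intersection is closed. Boundedness is where Assumption \ref{ass2} enters. Since $\lambda>0$ satisfies $\lambda^T N = 0$, the vector $\lambda$ lies in $\ker\brac{N^T}$, which by construction is exactly the column span of $W$; thus $\lambda = W\mu$ for some $\mu$. Consequently, on $\omega_{c_0}$ we have $\lambda^T x = \mu^T W^T x = \mu^T c_0 =: S$, a constant. With $\lambda_i>0$ and $x_i\geq 0$ this forces $x_i \leq S/\lambda_i$ for every $i$, so $\omega_{c_0}$ is bounded.

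Forward invariance splits into preservation of the linear constraint and preservation of nonnegativity. The constraint is handled by (\ref{eq:crel}): because the columns of $W$ span $\ker\brac{N^T}$ we have $W^T N = 0$, so $\frac{d}{dt}\brac{W^T x(t,x_0)} = W^T N \vel{k}{x} = 0$, whence $W^T x(t,x_0) \equiv W^T x_0 = c_0$ throughout the interval of existence. For nonnegativity I would invoke the standard tangency (Nagumo-type) criterion for the orthant: on the face $x_i = 0$, every rate function $v_j$ in (\ref{rf}) with $\alpha_{ij}>0$ vanishes because it carries the factor $x_i^{\alpha_{ij}}$, so only reactions that do not consume $A_i$ contribute to $\dot{x}_i$, and for those $N_{ij} = \beta_{ij} \geq 0$. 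Hence $\dot{x}_i \geq 0$ whenever $x_i = 0$, so the vector field points into $\{x\geq 0\}$ along its boundary and the orthant is forward invariant.

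The main obstacle is this nonnegativity step, since making the tangency heuristic rigorous requires an invariance theorem rather than a direct computation; I would cite Nagumo's theorem (the subtangentiality condition) to close the argument. I also note that the boundedness of $\omega_{c_0}$ established above, together with its invariance, supplies the a priori bound that rules out finite-time blow-up and guarantees that $x(t,x_0)$ is defined for all $t\geq 0$, so that "forward invariant" is meaningful on the entire half-line.
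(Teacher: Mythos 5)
Your proof is correct and complete; note that the paper itself gives no proof of Lemma~\ref{lem:Om_forward_invariant}, deferring instead to \cite{ConradiMincheva2014}, and your argument follows exactly the standard route taken there: convexity and closedness of $\omega_{c_0}$ as the intersection of the nonnegative orthant with an affine set, boundedness from a positive left-kernel vector $\lambda = W\mu$ supplied by Assumption~\ref{ass2}, and forward invariance from $W^T N = 0$ together with the Nagumo tangency condition $\dot{x}_i \geq 0$ on the faces $x_i = 0$, where every mass-action rate consuming $A_i$ vanishes and the remaining contributions have $N_{ij} = \beta_{ij} \geq 0$. Your closing observation that boundedness plus invariance rules out finite-time blow-up, so that solutions exist on all of $[0,\infty)$ and forward invariance is meaningful on the whole half-line, is a worthwhile point that the lemma's statement quietly presupposes.
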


  The  proof of  Lemma~\ref{lem:Om_forward_invariant}  
  is available in \cite{ConradiMincheva2014}.

 To study the dynamics of system (\ref{crm}) on invariant sets 
  $\omega_{c_0}$ we let $S\in\R^{n\times r}$ be the matrix of full 
  column rank whose columns are an orthonormal basis of $\im(N)$ and
  we let the matrix $Z\in\R^{n\times (n-r)}$ be the matrix of full
  column rank whose columns are the orthonormal basis of
  $\im(N)^\perp\equiv\ker\brac{N^T}$. Then the linear transformation
 \begin{equation}\label{eq:lin_trans}
    x \to (S^T\, x, Z^T\,x)
  \end{equation}
  sends $x\in\R^n$ to an element $\xi\in\im(N)$ and to an element
  $\eta\in\im(N)^\perp$:
  \begin{equation}
    \label{eq:def_xi_eta_x}
    \xi := S^T\, x \text{ and } \eta := Z^T\, x.
  \end{equation}
  Note that $\xi$ and $\eta$ are unique for given $S$, $Z$ (as
  $\im(N)$ and $\im(N)^\perp$ are complementary subspaces).  Since by
  assumption, $S$ and $Z$ are orthonormal we recover
  \begin{equation}
    \label{eq:x_from_xi_eta}
    x \equiv x\brac{\xi,\eta} = S\, \xi + Z\, \eta.
  \end{equation}
  We further note that by construction
  $\ker(W^T)=\ker(Z^T)=im(N)^\perp$ and hence all elements
  $x\in\omega_{c_0}$ are sent to the same element
  $\eta_0\in\im(N)^\perp$:
  \begin{displaymath}
    x_1,x_2 \in \omega_{c_0} \Rightarrow Z^T\, x_1 = Z^T\, x_2
    =:\eta_0,\; \forall x_1,x_2\in\omega_{c_0}.
  \end{displaymath}
  We now apply the linear transformation (\ref{eq:lin_trans}) to the system (\ref{crm}) to
  obtain: 
  \begin{subequations}
    \begin{align}
      \label{eq:xi_ode_def}
      \dot \xi &= S^T\, \dot x  = S^T N\, v\brac{k, x\brac{\xi,\eta}}
      \\
      \label{eq:eta_ode}
      \dot \eta &= Z^T\, \dot x = Z^T\, N\, v\brac{k,
        x\brac{\xi,\eta}} \equiv 0.
    \end{align}
  \end{subequations}
  That is, $\eta$ is constant, reflecting the invariance of
  $\omega_{c_0}$. We introduce the abbreviation:
  \begin{equation}
    \label{eq:def_g_eta}
    g_{\eta} (k, \xi) := S^T N\, v\brac{k, x\brac{\xi,\eta}},
  \end{equation}
  where in complete analogy to $c_0$ above we interpret $\eta$ as \dred{a} 
  parameter vector. Then every $\eta_0\in\R^{n-r}$
  identifies an $r$-dimensional dynamical system
  \begin{equation}
    \label{eq:xi_ode}
    \dot \xi = g_{\eta_0} (k, \xi).
  \end{equation}
  Now studying the system (\ref{crm}) restricted to a level set
  $\omega_{c_0}$ is equivalent to studying the system
  (\ref{eq:xi_ode}) with $\eta_0 = Z^T\, x_0$ for some $x_0\in\omega_c$.

Solutions $\xi(t,\xi_0)$ of (\ref{eq:xi_ode}) 
give rise to solutions of (\ref{crm})
\begin{displaymath}
  x(t,x_0) = S\, \xi (t,\xi_0) + Z\, \eta_0.\
\end{displaymath}
Since a  solution of the system  (\ref{crm}),  $x(t,x_0) \geq 0$  for
all $t\geq 0$,  it follows  that the corresponding solution $\xi
(t,\xi_0)$  of (\ref{eq:xi_ode}) remains in  the set
\begin{equation}
  \label{eq:def_Om_xi}
  \Omega_{\eta_0} = \left\{ \xi \in \R^r | 
    S\, \xi \geq - Z\, \eta_0
  \right\}.
\end{equation}

The set  $\Omega_{\eta_0}$ has similar properties as the set $\omega_{c_0}$. 
We have the following lemma for $\Omega_{\eta_0}$ which will be used in Corollary~\ref{coro:deg_gxi}. The   proof is
available in \cite{ConradiMincheva2014}.

\begin{lemma}\label{lem:lem2}[$\Omega_{\eta_0}$ convex, compact and forward
  invariant]
  The set $\Omega_{\eta_0}$ is convex, compact and forward
  invariant. 
\end{lemma}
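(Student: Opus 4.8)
The plan is to derive all three properties of $\Omega_{\eta_0}$ from the corresponding properties of $\omega_{c_0}$ established in Lemma~\ref{lem:Om_forward_invariant}, by exhibiting an affine homeomorphism between the two sets that moreover conjugates their dynamics. Fix $\eta_0\in\R^{n-r}$ and set $c_0:=W^T Z\,\eta_0$. I would work with the affine map $\Phi(\xi):=S\,\xi+Z\,\eta_0$ and the linear map $\Psi(x):=S^T x$, and show that they are mutually inverse bijections between $\Omega_{\eta_0}$ and $\omega_{c_0}$.

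First I would check that $\Phi$ maps $\Omega_{\eta_0}$ into $\omega_{c_0}$. The defining inequality $S\,\xi\geq -Z\,\eta_0$ of $\Omega_{\eta_0}$ is precisely $\Phi(\xi)\geq 0$; moreover, since the columns of $S$ span $\im(N)$ while those of $W$ span $\ker(N^T)=\im(N)^\perp$, one has $W^T S=0$, so $W^T\Phi(\xi)=W^T Z\,\eta_0=c_0$, giving $\Phi(\xi)\in\omega_{c_0}$. Conversely, orthonormality of the columns of $S$ and $Z$ (which jointly form an orthonormal basis of $\R^n$) yields $SS^T+ZZ^T=I_n$, $S^T S=I_r$, and $S^T Z=0$; together with the already noted fact that $Z^T x=\eta_0$ for every $x\in\omega_{c_0}$ (which holds for our $c_0$, since $W^T(x-Z\,\eta_0)=0$ forces $x-Z\,\eta_0\in\ker(W^T)=\im(N)$), these identities show $\Psi(\omega_{c_0})\subseteq\Omega_{\eta_0}$ and $\Psi\circ\Phi=\mathrm{id}$, $\Phi\circ\Psi=\mathrm{id}$. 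Hence $\Phi$ is an affine homeomorphism with inverse $\Psi$.

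Convexity and compactness then follow at once: $\Omega_{\eta_0}=\Psi(\omega_{c_0})$ is the image of a convex, compact set under a continuous affine map, and is therefore itself convex and compact (convexity can also be seen directly, as $\Omega_{\eta_0}$ is an intersection of closed half-spaces). For forward invariance I would invoke the flow conjugacy recorded just before the statement: a solution $\xi(t,\xi_0)$ of (\ref{eq:xi_ode}) corresponds to the solution $x(t,x_0)=S\,\xi(t,\xi_0)+Z\,\eta_0=\Phi\brac{\xi(t,\xi_0)}$ of (\ref{crm}) with $x_0=\Phi(\xi_0)$, and conversely $\xi(t,\xi_0)=\Psi\brac{x(t,x_0)}$. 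Thus for $\xi_0\in\Omega_{\eta_0}$ the point $x_0=\Phi(\xi_0)$ lies in $\omega_{c_0}$; forward invariance of $\omega_{c_0}$ keeps $x(t,x_0)\in\omega_{c_0}$ for all $t\geq 0$, and applying $\Psi$ returns $\xi(t,\xi_0)=\Psi\brac{x(t,x_0)}\in\Psi(\omega_{c_0})=\Omega_{\eta_0}$.

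The routine steps are the linear-algebra identities for $S$, $Z$, and $W$. The one point demanding care is the forward-invariance argument: I must make sure the conjugacy is genuine, i.e.\ that projecting a solution of (\ref{crm}) by $\Psi$ yields an \emph{actual} solution of (\ref{eq:xi_ode}) carrying the \emph{same} constant $\eta_0$. This rests on $\dot\eta\equiv 0$ from (\ref{eq:eta_ode}) together with the uniqueness of the pair $(\xi,\eta)$ for given $S,Z$, so that the trajectory in $\xi$-coordinates is uniquely determined by $x(t,x_0)$ and cannot leave the fiber over $\eta_0$.
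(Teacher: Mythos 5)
Your proof is correct, but there is nothing in-text to compare it against: the paper offers no argument for Lemma~\ref{lem:lem2} and defers the proof to \cite{ConradiMincheva2014}. Your route---transporting convexity, compactness and forward invariance from $\omega_{c_0}$ (Lemma~\ref{lem:Om_forward_invariant}) through the mutually inverse affine maps $\Phi(\xi)=S\,\xi+Z\,\eta_0$ and $\Psi(x)=S^T x$ with $c_0:=W^T Z\,\eta_0$---makes the lemma self-contained modulo Lemma~\ref{lem:Om_forward_invariant}, and the correspondence you build is exactly the one Lemma~\ref{lem:lem3} uses anyway, so it does double duty. All the linear-algebra identities check out: $W^T S=0$ and $\ker(W^T)=\brac{\im(N)^\perp}^\perp=\im(N)$ because the columns of $W$ span $\ker(N^T)$, while $S^T S=I_r$, $Z^T Z=I_{n-r}$, $S^T Z=0$ and $SS^T+ZZ^T=I_n$ because $(S,Z)$ is an orthonormal basis adapted to $\im(N)\oplus\im(N)^\perp$; incidentally, your use of $\ker(W^T)=\im(N)$ silently corrects a typo in the paper, which asserts $\ker(W^T)=\ker(Z^T)=\im(N)^\perp$ where $\im(N)$ is meant. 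On the one point you flag as delicate, the conjugacy is indeed genuine: $\frac{d}{dt}\Phi\brac{\xi(t)}=S\,g_{\eta_0}\brac{k,\xi(t)}=SS^T N\, v\brac{k,\Phi(\xi(t))}=N\,v\brac{k,\Phi(\xi(t))}$, since $SS^T$ is the orthogonal projector onto $\im(N)$ and every column of $N$ lies in $\im(N)$; identifying $\Psi\brac{x(t,x_0)}$ with $\xi(t,\xi_0)$ then rests on uniqueness of solutions, available because the right-hand sides are polynomial, hence locally Lipschitz---you gesture at this via $\dot\eta\equiv 0$ and the uniqueness of the $(\xi,\eta)$-decomposition, and it would be worth one explicit sentence. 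Two trivial remarks: if $\Omega_{\eta_0}=\emptyset$ all three claims hold vacuously, and your choice $c_0=W^T Z\,\eta_0$ is consistent with the paper's $\eta_0=Z^T x_0$, $c_0=W^T x_0$, since $W^T Z Z^T x_0=W^T\brac{x_0-SS^T x_0}=W^T x_0$.
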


The following lemma compares the number and type of equilibria of
(\ref{crm}) in the set $\omega_{c_0}$ to the number and type of
equilibria of (\ref{eq:xi_ode}) in the set $\Omega_{\eta_0}$.

For a set $A$ we will denote its interior by $\inte(A)$ and its boundary by $\partial A$.
\begin{lemma}\label{lem:lem3}[Equilibrium points.]
  \begin{itemize}
  \item[(a)] 
        A positive point $x^*$ is an equilibrium of (\ref{crm}) in 
      $\omega_{c_0}$ with $c_0 = W^T\, x^*$, if and only if $\xi^*= S^T\, x^*$ is an 
      equilibrium of (\ref{eq:xi_ode}) for $\eta_0 = Z^T\, x^*$ (where
      positivity of $x^*$ entails $\xi^* \in \Omega_{\eta_0}$).
  \item[(b)] The number of equilibria in $\omega_{c_0}$ of  (\ref{crm})
    equals the number of equilibria    of (\ref{eq:xi_ode}) in
    $\Omega_{\eta_0}$.
  \item[(c)] The boundary $\partial \omega_{c_0}$ of the  set
    $\omega_{c_0}$ contains an equilibrium point of  (\ref{crm}) if
    and only if the boundary  $\partial \Omega_{\eta_0}$ of
    $\Omega_{\eta_0}$ contains an equilibrium point  of (\ref{eq:xi_ode}).
  \end{itemize}
\end{lemma}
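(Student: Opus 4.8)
The plan is to realize the coordinate change $x\mapsto(\xi,\eta)$ as a genuine linear isomorphism and to transfer equilibria and boundary points back and forth through it. The starting point is that the columns of $S$ and $Z$ together form an orthonormal basis of $\R^n$, so the $n\times n$ matrix $\brac{S\ Z}$ is orthogonal. This yields the identities $S^T S=I_r$, $Z^TZ=I_{n-r}$, $S^TZ=0$, and $SS^T+ZZ^T=I_n$. Fixing $\eta_0=Z^T x^*$, I introduce the affine map $\Phi(\xi)=x\brac{\xi,\eta_0}=S\xi+Z\eta_0$. Using the identities above one checks directly that $S^T\Phi(\xi)=\xi$ and $Z^T\Phi(\xi)=\eta_0$, so $\Phi$ is a bijection from $\R^r$ onto the affine subspace $\{x : Z^Tx=\eta_0\}$, with inverse $x\mapsto S^Tx$. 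Because $\ker(W^T)=\ker(Z^T)$, the condition $W^Tx=c_0$ is equivalent to $Z^Tx=\eta_0$ for the appropriate $\eta_0$, and since $\Phi(\xi)\ge 0\iff S\xi\ge -Z\eta_0$, the map $\Phi$ restricts to a bijection $\Omega_{\eta_0}\to\omega_{c_0}$.

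For part (a), the forward direction is immediate: if $Nv(k,x^*)=0$ then $g_{\eta_0}(k,\xi^*)=S^TNv\brac{k,x\brac{\xi^*,\eta_0}}=S^TNv(k,x^*)=0$, where I use $x\brac{\xi^*,\eta_0}=SS^Tx^*+ZZ^Tx^*=x^*$. The reverse direction is the crux and uses that the right-hand side of (\ref{crm}) always lies in $\im(N)$. Suppose $g_{\eta_0}(k,\xi^*)=0$ and set $x^*=\Phi(\xi^*)$. Since $Nv(k,x^*)\in\im(N)$ and the columns of $S$ are an orthonormal basis of $\im(N)$, we may write $Nv(k,x^*)=Sa$ for a unique $a\in\R^r$; then $0=S^TNv(k,x^*)=S^TSa=a$, hence $Nv(k,x^*)=0$. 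Thus $x^*$ is an equilibrium of (\ref{crm}), and positivity of $x^*$ gives $x^*\ge 0$, i.e.\ $\xi^*\in\Omega_{\eta_0}$.

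Part (b) then follows at once: by part (a) the bijection $\Phi:\Omega_{\eta_0}\to\omega_{c_0}$ carries equilibria of (\ref{eq:xi_ode}) to equilibria of (\ref{crm}) and back, so the two sets of equilibria are in one-to-one correspondence and hence equal in number. Positivity is not needed here, since the equivalence $S^TNv=0\iff Nv=0$ holds at every point of $\omega_{c_0}$. For part (c), I use that $\Phi$ preserves the active constraints: a point $x\in\omega_{c_0}$ lies on $\partial\omega_{c_0}$ exactly when some coordinate $x_i=0$, and since $x=\Phi(\xi)$ this is precisely the condition that some component of $S\xi+Z\eta_0$ vanishes, i.e.\ that $\xi\in\partial\Omega_{\eta_0}$. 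Therefore $\Phi$ maps $\partial\Omega_{\eta_0}$ bijectively onto $\partial\omega_{c_0}$, and combining this with the equilibrium equivalence of part (a) establishes the boundary statement.

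The main obstacle is the reverse implication in part (a): one must recognize that although $S^T$ is not invertible on all of $\R^n$, its restriction to $\im(N)$ is an isometry, so the reduced equation $g_{\eta_0}(k,\xi^*)=0$ loses no information and forces the full equilibrium condition $Nv(k,x^*)=0$. Everything else is bookkeeping with the orthogonality relations for $S$ and $Z$.
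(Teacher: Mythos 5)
Your proof is correct and takes essentially the same route as the paper: the orthogonal change of coordinates $x \mapsto (S^T x, Z^T x)$ gives a one-to-one correspondence between $\omega_{c_0}$ and $\Omega_{\eta_0}$ (the paper's argument for (b)), equilibria match up because $S^T$ is injective on $\im(N)$, and the boundary statement is read off from $x = S\xi + Z\eta_0 \geq 0$, exactly as in the paper's contrapositive for (c). The only difference is level of detail: you spell out the reverse implication in (a) --- that $S^T N v(k,x^*) = 0$ together with $Nv(k,x^*) \in \im(N)$ forces $Nv(k,x^*) = 0$ --- which the paper's one-line proof leaves implicit.
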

\begin{proof}
  \begin{itemize}
  \item[(a)] 
    This follows since $S^T N v(k, x^*) =g_{\eta}(k, \xi^* )=0$,
    where $x^* \in \omega_{c_0}$ corresponds to $\xi^* \in
    \Omega_{\eta_0}$ such that  $\eta_0 =Z^T x^*$. 
  \item[(b)] This follows from the fact that $S$ and $Z$ are full rank
    matrices, and thus the correspondence between $x$ and $(\xi,\eta)$
    is one-to-one.
  \item[(c)] 
    We prove the contrapositive.
    An equilibrium  $x^* >0$ of  (\ref{crm}) is not on $\partial
    \omega_{c_0}$ if and only if  an equilibrium $\xi^*$ of
    (\ref{eq:xi_ode}) is such that  $S \xi^* > -Z \eta$.
  \end{itemize}
\end{proof}

  From hereon we will assume level sets $\omega_{c_0}$ do not contain
  equilibria with zero coordinates, referred to as {\it boundary equilibria}.
  \begin{assumption}\label{ass3}
    Here we study only DE models of biochemical mechanisms such that
    the level set $\omega_{c_0}$ does not contain any boundary
    equilibria, that is,
    \begin{displaymath}
   \mbox{  if }   x \in \partial \omega_{c_0} \Rightarrow Nv(k,x) \neq 0.
    \end{displaymath}
\end{assumption}

  The following corollary follows by   Lemma \ref{lem:lem3}  and
  Assumption~\ref{ass3}.
  \noindent
  \begin{corollary}\label{cor_num_eq}
        Under Assumption 3 the boundary $\partial \Omega_{\eta_0}$ of $\Omega_{\eta_0}$ does
    not contain an equilibrium of (\ref{eq:xi_ode}).  \\
    The number of equilibria of (\ref{crm})   in $\inte (\omega_{c_0})$  equals the 
    number of equilibria of (\ref{eq:xi_ode}) in $\inte (\Omega_{\eta_0})$.
  \end{corollary}
  
 \begin{remark}
    In essence, in Sec.~\ref{sec:degree_g} we will study the number of
    equilibria of the reduced system (\ref{eq:xi_ode}) in $\inte
    (\Omega_{\eta_0})$. By Corollary~\ref{cor_num_eq}  we will obtain
    the corresponding result on the number of equilibria of  the system 
    (\ref{crm})   in $\inte (\omega_{c_0})$.
  \end{remark}

\section{The Jacobian $J$ parametrized at $(v,x)$ and its projection
  on $(\xi, \eta) $ space}
\label{sec:jac_ma}

The Jacobian matrix  $J (k,x)$ of (\ref{crm}) has entries 
\begin{equation}\label{jac1}
J_{il} (k,x) = \sum_{j=1}^{m} N_{ij} \alpha_{lj} k_j x_{1}^{\alpha_{1j}} \ldots x_{l}^{\alpha_{lj}-1} \ldots x_{n}^{\alpha_{nj}}. 
\end{equation}
Recall (\ref{rf}), then the Jacobian  can be written also as  
\begin{equation}\label{jac}
J_{il}(k,x) = J_{il} (v,x)= \sum_{j=1}^{m} N_{ij} \alpha_{lj} \frac{v_j}{x_l} .
\end{equation} 
Note that if the concentrations  $x$  and the rate functions $v
(k,x)$ are evaluated at a positive equilibrium, they are positive and can be 
used as parameters in (\ref{jac}).

For example, the Jacobian matrix of the right-hand side of the system  (\ref{eq:de_cyc}) is  
\begin{equation}\label{jac-ex}
J (v,x) = \left(
\begin{matrix}
-\frac{ v_3}{x_1} -\frac{v_5}{x_1}&\frac{2v_1}{x_2} +\frac{v_4}{x_2}  & 2\frac{v_1}{x_3}+ \frac{v_2}{x_3} \\
\frac{v_5}{x_1}  & -\frac{v_1}{x_2} -\frac{v_4}{x_2}  & -\frac{v_1}{x_3}  \\
\frac{v_3}{x_1} &-\frac{v_1}{x_2} &    -\frac{v_1}{x_3}  -\frac{v_2}{x_3} 
\end{matrix}
\right).
\end{equation}

For the remainder of this contribution we make the following assumption.
  \begin{assumption}\label{ass4}
    We assume that $\im(J(v,x)) = \im(N)$.
  \end{assumption}

Since the rank of  $N$ is  $r$, it follows 
under the above assumption 
that $rank (J(v,x)) \leq r$ and that the characteristic polynomial of 
the Jacobian (\ref{jac}) is
\begin{equation}
  \label{eq:cp}
  \det(\lambda I -J(v,x))=  \lambda^{n-r}\, \left( \lambda^r+ a_1\, \lambda^{r-1} + \ldots + a_{r-1}
    \lambda + a_r \right) = \lambda^{n-r} q(\lambda),
\end{equation}
where the coefficients $a_i = a_i (v,x)$, $i=1$,\ldots, $r$ are
computed as the sum of all principal minors of order $i$ of the
negative Jacobian $-J(v,x)$ \cite{gantmacher1960}.

The coefficients   $a_i(v,x)$ of (\ref{eq:cp})  are rational functions
in $x$ and $v$ by (\ref{jac}). For example, the non-zero  coefficients
of the characteristic polynomial of the Jacobian (\ref{jac-ex}) are
\begin{align}
  a_1(v,x)& = \frac{ v_3 +v_5}{x_1} + \frac{v_1+v_4}{x_2}
  +\frac{v_1+v_2}{x_3} \label{eq:ex_a1} \\ 
  a_2 (v,x)&=   \frac{v_1v_3 - v_1v_5 +v_3v_4   }{x_1 x_2} +
  \frac{-v_1 v_3 +v_1v_5 +v_2v_5}{x_1 x_3}  +  \frac{v_1 v_2 +v_1v_4
    +v_2v_4}{x_2 x_3}. \label{eq:ex2}
\end{align}

  It is easy to verify, that the Jacobian with respect to $\xi$ of
  $g_\eta(k,\xi)$ from (\ref{eq:def_g_eta}) is given by
  \begin{equation}
    \label{eq:def_G_eta}
    G_\eta(v,\xi) = S^T\, J(v,x)\, S,
  \end{equation}
  where we have suppressed the $\xi$, $\eta$ dependence of $x$.

The relation  between the coefficient $a_r(v,x) $ and the determinant
of the negative Jacobian $-G_{\eta}(v,\xi)$ of the right-hand side of
the  reduced system $ \dot{\xi}=g(v,\xi)$  is considered in the next
lemma. A special case of this lemma with $r=6$ is available in
\cite{ConradiMincheva2014}. 

\begin{lemma}
  \label{lem:det_G_a6}
  The following equivalence is true
  \begin{displaymath}
    \det\brac{-G_\eta(v,\xi)} \equiv \det\brac{-S^T\, J(v,x)\, S} \equiv
    a_r(v,x)\ .
  \end{displaymath}
\end{lemma}
\begin{proof} \mbox{}\\
  Let $\lambda_1 (v,x)$, \ldots, $\lambda_n (v,x)$ be the eigenvalues of $J(v,x)$
  and hence the roots of the characteristic polynomial (\ref{eq:cp}), 
  where we assume that $\lambda_i\equiv\lambda_i(v,x)$. By (\ref{eq:cp}), 
   there exist $(n-r)$ trivial  eigenvalues that are
  identically  zero for all values of \brac{v,x}  and $r$ non-trivial  eigenvalues that are nonzero for
  some values of \brac{v,x}.  For simplicity we assume that $\lambda_i$, $i=1$,
  \ldots, $r$ are nontrivial and $\lambda_i$, $i=n-r+1,\ldots,  n$ are
  trivial.
  
  First we show that $a_r$ is the product of the nontrivial
  eigenvalues $$a_r=\prod_{i=1}^r \lambda_i.$$ 
  If we  write the polynomial $q(\lambda)$
  from the  characteristic polynomial (\ref{eq:cp}) in factored
  form and substitute  $\lambda =0$ we obtain  $a_r=\prod_{i=1}^r
  \lambda_i.$
  
  Next we show that $\det(-S^T J (v,x) S ) = \prod_{i=1}^r \lambda_i$,
  which will prove the claim $\det(-S^T J (v,x) S ) = a_r$. For this
  purpose we apply the orthonormal transformation
  $\phi=(S, Z)$ to $J(v,x)$. 
  The transformed matrix has a block form
  \begin{displaymath}  
    \phi^T J \phi=\left[ 
      \begin{array}{c|c} 
        S^T J S& S^TJ Z \\ \hline 
        0 & 0
      \end{array}\right].
  \end{displaymath}
  Since $\phi$ is orthonormal,  $J(v,x)$ and $\phi^T J(v,x) \phi$ have
  the same eigenvalues.  Both matrices $J(v,x)$ and $\phi^T J(v,x) \phi$
  have $(n-r)$ trivial eigenvalues. Thus  $$\det(\lambda I - \phi^T J (v,x)\phi 
  ) =\lambda^{n-r} \det( \lambda I -  S^T J (v,x) S  ) =\lambda^{n-r}
  \tilde{q}(\lambda).$$  If  $ \tilde{q}(\lambda)$ is written in
  factored form and we let $\lambda =0$ we obtain $\det(-S^T J(v,x) S)
  = \prod_{i=1}^r \lambda_i$. Thus  $\det(-S^T J (v,x) S ) = a_r$.
\end{proof}

\section{The degree of $g_{\eta}(k,\xi)$}
\label{sec:degree_g}

Let $U  \subset \mathbb{R}^n$ be an open and  bounded  set. The
closure of $U$ will be denoted by $\bar{U}$ and the
boundary of $U$ by  $ \partial U$. Thus, $\bar{U}
= U \cup \partial U$ is a compact set.

Let  $F:\bar{{U}} \to \mathbb{R}^n$  be a smooth function, where using  the
usual notation we write $F(x) \in C^1 (\bar{{U}})$. We denote the
Jacobian matrix of $F(x)$ by 
\begin{equation}\label{eq:jac}
\tilde{J}(x) = \left[\frac{\partial F_i}{\partial  x_j}\right]
\end{equation} 
  and its determinant by $\det (\tilde{J}(x))$. A point $x \in {U}$
is a {\it regular point} for $F(x)$ if $\det (\tilde{J}(x)) \ne 0$. A point $y
\in \mathbb{R}^n$ is called a {\it regular value} if all $x \in
U$ such that  $F(x)=y$ are regular. 

Next we define the  {\it (topological or Brouwer) degree of $F(x)$} \cite{Deimling2010}, denoted by $\deg(F)$. 
In the next definition we   use the sign function $\sign: \R \to \{ -1,0,1\}$. 

\begin{definition}\label{def:topo-degree}[(topological) degree]
  If $y \notin F(\partial {U})$ and $y$ is a regular value, the
  {\it   degree} of F is defined by 
  \begin{equation}\label{def:deg} 
    \deg (F) =\deg (F, U, y) =\sum_{F(x)=y} \sign (\det(-\tilde{J}(x))). 
  \end{equation}
 \end{definition}

   \begin{remark} Note that we use  $\sign (\det (-\tilde{J} (x)))$  in (\ref{def:deg}) in place  of $\sign (\det (\tilde{J} (x))) $
to avoid the  case of the degree depending on the dimension $n$ of $\mathbb{R}^n$ similarly to \cite{hofbauer1990}. \end{remark}

  The sum in
  (\ref{def:deg}) is over all solutions $x \in {U}$  of $F(x) =y$
  such that $\det (-\tilde{J}(x)) \ne 0$. If $F(x) =y$ does not have solutions
  $x \in {U}$, then we set $\deg(F) =0$. 
  Since we are interested in the equilibrium solutions  $x^*$ of $\dot{x} =F(x) $ that satisfy
$F(x^*)=0$, we will  let $y=0$ in (\ref{def:deg}).

Next we study the degree of the function $ g_{\eta}\brac{v,\xi}$
defined in (\ref{eq:xi_ode}).  Recall that 
  $x\equiv x(\eta,\xi)$ and that 
$a_r (v,x) =\det (-G_{\eta}
(v, \xi))$ by Lemma~\ref{lem:det_G_a6}, where $G_{\eta} (v,\xi) $ is
the Jacobian of the function $ g_{\eta}\brac{v,\xi}$ 
  given in (\ref{eq:def_G_eta}). 

The next lemma is similar to  Lemma 5.4 available in the Supporting
Information of \cite{ConradiMincheva2014}. 
\begin{remark}  Note that in the lemma and corollaries below  the assumption that $\partial \Omega_{\eta}$ does not contain any boundary equilibrium of the system (\ref{eq:xi_ode}) is 
automatically satisfied by Assumption~\ref{ass3} and Corollary~\ref{cor_num_eq}.  
 \end{remark}

\begin{lemma}\label{lem:lem5}
    Let $g_\eta$ be as in (\ref{eq:def_g_eta}).
    Fix  $\eta\in\R^{n-r}$ 
    and assume that the boundary $\partial \Omega_{\eta}$ does not
     contain any equilibria of (\ref{eq:xi_ode}).
    If all equilibria $\xi\in\Omega_{\eta}$ are regular, then
  \begin{equation}
    \label{eq:def_deg_g_eta}
    \deg(g_\eta ,\inte(\Omega_\eta),0) = \sum_{ \{ \xi \in
      \inte(\Omega_\eta) | g_{\eta} (\xi) =0\}} \sign (a_r(v,x(\eta,\xi))).
  \end{equation}
\end{lemma}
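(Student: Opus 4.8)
The plan is to apply Definition~\ref{def:topo-degree} directly to the
function $g_\eta$ on the open bounded set $\inte(\Omega_\eta)$ with target
value $y=0$, and then rewrite each summand using
Lemma~\ref{lem:det_G_a6}. First I would verify that the hypotheses of the
degree definition are met: we need $0\notin g_\eta(\partial\Omega_\eta)$
and that $0$ is a regular value. The first is exactly the standing
assumption that $\partial\Omega_\eta$ contains no equilibria of
(\ref{eq:xi_ode}), since equilibria are precisely the solutions of
$g_\eta(\xi)=0$. The second is the hypothesis that all equilibria
$\xi\in\Omega_\eta$ are regular, which by the definition of regular point
means $\det(G_\eta(v,\xi))\neq 0$ at every such $\xi$; combined with the
first point, this makes $0$ a regular value in the sense required.

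With the hypotheses in place, Definition~\ref{def:topo-degree} gives
\begin{displaymath}
  \deg(g_\eta,\inte(\Omega_\eta),0)
  = \sum_{\{\xi\in\inte(\Omega_\eta)\,|\,g_\eta(\xi)=0\}}
     \sign\bigl(\det(-G_\eta(v,\xi))\bigr),
\end{displaymath}
where the Jacobian matrix of $g_\eta$ in $\xi$ is $G_\eta(v,\xi)$ by
(\ref{eq:def_G_eta}), so that the role of $\tilde J$ in
(\ref{def:deg}) is played here by $G_\eta$. The next step is purely a
substitution: by Lemma~\ref{lem:det_G_a6} we have the identity
$\det(-G_\eta(v,\xi))\equiv a_r(v,x(\eta,\xi))$, where $x(\eta,\xi)$ is
recovered from $\xi,\eta$ through (\ref{eq:x_from_xi_eta}). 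Replacing
$\det(-G_\eta(v,\xi))$ by $a_r(v,x(\eta,\xi))$ inside the sign function
yields exactly (\ref{eq:def_deg_g_eta}), completing the argument.

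I do not expect any genuine obstacle, since the lemma is essentially an
unpacking of the degree definition followed by the algebraic identity
already established in Lemma~\ref{lem:det_G_a6}; the content lies in the
earlier results rather than here. The one point requiring a word of care
is the regularity bookkeeping: the degree formula sums only over solutions
where the Jacobian determinant is nonzero, and one must confirm that
under the stated hypothesis \emph{every} equilibrium in
$\inte(\Omega_\eta)$ is regular, so that no equilibrium is silently
dropped from the sum and the index $\sign(a_r)$ is well defined (never
$0$) at each. This follows immediately from the assumption that all
equilibria $\xi\in\Omega_\eta$ are regular together with
Lemma~\ref{lem:det_G_a6}, which identifies the relevant determinant with
$a_r$. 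Finally I would note that the finiteness of the sum is guaranteed
because the regular equilibria are isolated and $\Omega_\eta$ is compact
by Lemma~\ref{lem:lem2}.
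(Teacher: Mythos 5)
Your proof is correct: the paper itself gives no proof of this lemma, deferring instead to the similar Lemma~5.4 in the Supporting Information of \cite{ConradiMincheva2014}, and your argument --- verifying the hypotheses of Definition~\ref{def:topo-degree} on the open bounded set $\inte(\Omega_\eta)$ (with $0\notin g_\eta(\partial\Omega_\eta)$ from the boundary assumption and regularity of $0$ from the regularity hypothesis), identifying the Jacobian of $g_\eta$ as $G_\eta$ via (\ref{eq:def_G_eta}), and substituting $\det\brac{-G_\eta(v,\xi)} \equiv a_r(v,x(\eta,\xi))$ by Lemma~\ref{lem:det_G_a6} --- is exactly the intended direct derivation. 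Your care about the bookkeeping (no equilibrium dropped from the sum, $\sign(a_r)\neq 0$ at each, finiteness of the sum from isolation of regular zeros plus compactness of $\Omega_\eta$ by Lemma~\ref{lem:lem2}) supplies precisely the details the paper leaves implicit.
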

\begin{remark}
By Corollary~\ref{cor_num_eq} the equilibria of (\ref{eq:xi_ode}) are in $\inte (\Omega_{\eta})$.
Therefore the degree of $g_{\eta} (\xi)$, $\deg (g_\eta,\inte(\Omega_\eta), 0)$  given by (\ref{eq:def_deg_g_eta}) is well defined. 
\end{remark}
We  obtain the following corollaries on the degree of $g_{\eta}
(k,\xi)$ where $\xi \in\Omega_{\eta}$. Similar corollaries  for the
special case of $r=6$ are available in \cite{ConradiMincheva2014}.

First we need the following theorem on the homotopy invariance of the degree \cite{CraciunHelton2008}.
\begin{theorem}\label{thm:deg_inv}
Let $U \subset \mathbb{R}^n$ be a bounded and open set. Let $H(x,s): \bar{U} \times [0,1] \to \mathbb{R}^n $, 
be a continuously varying set of functions such that $H(x,s)$ does not have any zeroes on the boundary of $U$ for all $s \in [0,1]$. Then $\deg(H(x,s))$
is constant for all $s \in [0,1]$.
\end{theorem}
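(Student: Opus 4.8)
The plan is to prove homotopy invariance through the analytic (integral) representation of the degree, which converts the discrete quantity in Definition~\ref{def:topo-degree} into one that varies smoothly in $s$. The first step is to reduce to a smooth homotopy. Since $\bar U\times[0,1]$ is compact and $H$ has no zero on $\partial U\times[0,1]$, the continuous map $(x,s)\mapsto |H(x,s)|$ attains a positive minimum $\delta>0$ on $\partial U\times[0,1]$. Mollifying $H$ in $x$ (and, if needed, in $s$) produces a smooth map $\tilde H$ with $\sup|\tilde H-H|<\delta/2$; then $\tilde H$ still has no zero on the boundary, and because the sum in (\ref{def:deg}) is stable under perturbations smaller than the boundary value of $|H|$, one has $\deg(\tilde H(\cdot,s))=\deg(H(\cdot,s))$ for every $s$. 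Hence it suffices to treat smooth $H$.

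For smooth $H$ I would fix a smooth bump $\rho:\R^n\to\Rnn$ with $\int_{\R^n}\rho=1$ and support in the ball $B_\delta(0)$, and define
\begin{equation*}
  d(s)=\int_U \rho\brac{H(x,s)}\,\det\brac{D_x H(x,s)}\,dx.
\end{equation*}
Two facts make this the right object. First, for each $s$ at which $0$ is a regular value of $H(\cdot,s)$, localizing near the finitely many isolated zeros and changing variables shows $d(s)=\sum \sign(\det(D_x H))$ over those zeros, i.e.\ $d(s)=(-1)^n\deg(H(\cdot,s),U,0)$; the fixed factor $(-1)^n$ is harmless and merely records the convention $\det(-\tilde J)$ used in (\ref{def:deg}). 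Second, by Sard's theorem the set of such regular parameters $s$ has full measure, in particular is dense in $[0,1]$.

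The heart of the argument is to show $d$ is constant. Here I would establish the classical pointwise identity
\begin{equation*}
  \frac{\partial}{\partial s}\Big[\rho\brac{H}\,\det\brac{D_x H}\Big]
  =\sum_{i=1}^n \frac{\partial}{\partial x_i}\, w_i(x,s),
\end{equation*}
where each $w_i$ is assembled from $\rho(H)$, the derivative $\partial_s H$, and the cofactors of $D_x H$, so that the $s$-derivative of the integrand is a pure spatial divergence. Differentiating $d$ under the integral sign and applying the divergence theorem turns $d'(s)$ into a surface integral over $\partial U$; since $|H|\ge\delta$ on $\partial U$ while $\rho$ vanishes outside $B_\delta(0)$, the factor $\rho(H)$, and hence each $w_i$, vanishes on $\partial U$, so the boundary term is zero and $d'(s)\equiv 0$. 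Thus $d$ is constant, and evaluating at two regular parameters (dense by Sard) and dividing out $(-1)^n$ yields $\deg(H(\cdot,s_0))=\deg(H(\cdot,s_1))$ for all $s_0,s_1\in[0,1]$.

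I expect the divergence identity to be the \textbf{main obstacle}: proving it is a careful determinant and cofactor computation (Jacobi's formula for $\tfrac{d}{ds}\det$ combined with the chain rule applied to $\rho(H)$), in which the mixed second-derivative terms must cancel in exactly the pattern that collects everything into $\sum_i \partial_{x_i} w_i$. The reduction to the smooth case and the stability of the integer-valued degree under $\delta/2$-perturbations are standard \cite{Deimling2010} but must be stated with care, so that it is the integer degree that is genuinely constant and not merely the real-valued $d(s)$.
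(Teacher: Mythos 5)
The paper offers no proof of Theorem~\ref{thm:deg_inv} at all: it is imported verbatim from \cite{CraciunHelton2008} (resting ultimately on \cite{Deimling2010}), so there is no internal argument to compare yours against. On its own merits, your proposal is the classical analytic proof of homotopy invariance (Heinz's integral representation of the degree), and in outline it is correct: the divergence identity you isolate as the main obstacle is a genuine classical fact (Jacobi's formula plus the chain rule, with the mixed-partial cancellation encoded in the Piola identity $\sum_i \partial_{x_i}(\operatorname{cof} D_x H)_{ij}=0$, taking $w_i=\rho(H)\det$ of $D_xH$ with its $i$-th column replaced by $\partial_s H$), and your handling of the paper's $\det(-\tilde J)$ convention via the harmless factor $(-1)^n$ is right.

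Three points, however, need repair. First, your Sard step is wrong as stated: Sard's theorem gives full measure of regular \emph{values} $y\in\mathbb{R}^n$ for each fixed map $H(\cdot,s)$, not full measure of \emph{parameters} $s$ for which $0$ is a regular value of $H(\cdot,s)$; the latter set can be empty (take $H(x,s)=x^2$ in $n=1$, independent of $s$, with the double zero interior to $U$), and deducing density would require $0$ to be a regular value of the full map on $U\times[0,1]$ (parametric transversality), which is not assumed. Fortunately the step is dispensable: the meaningful content of the theorem is that $\deg(H(\cdot,s_0))=\deg(H(\cdot,s_1))$ at any two parameters where Definition~\ref{def:topo-degree} applies, and that follows from constancy of $d(s)$ evaluated at exactly those two parameters. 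Second, you invoke the divergence theorem over $U$, but an arbitrary bounded open set has no boundary regularity; the standard fix is to take $\rho$ supported in $B_{\delta/2}(0)$, note by compactness that $|H|>\delta/2$ on a full neighborhood of $\partial U\times[0,1]$, so each $w_i$ vanishes identically near $\partial U$, extends by zero to a compactly supported $C^1$ field on $\mathbb{R}^n$, and $\int_{\mathbb{R}^n}\operatorname{div} w\,dx=0$ with no boundary hypothesis. Relatedly, the identification $d(s)=(-1)^n\deg(H(\cdot,s),U,0)$ via localization and change of variables requires the support of $\rho$ to be small relative to that particular regular map, not merely smaller than $\delta$; since $d$ is constant in $s$ for each \emph{fixed} admissible $\rho$, choose the support adapted simultaneously to the two parameters being compared. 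Third, a mild circularity caution in your reduction: the $\delta/2$-stability of the integer degree that you use to pass to smooth $H$ is itself proved by applying smooth-homotopy constancy to the linear homotopy between two smooth approximants, so the constancy of $d$ must come first in the logical order — you gesture at this with the citation to \cite{Deimling2010}, and the standard ordering there resolves it, but as written your proof uses the stability before establishing the tool that proves it.
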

\begin{corollary}
  \label{coro:deg_gxi}
    Let $g_\eta$ be as in (\ref{eq:def_g_eta}) 
     and assume that the boundary $\partial \Omega_{\eta}$ does not
     contain any equilibria of (\ref{eq:xi_ode}).Then the 
    the following holds true:
  \begin{equation}
    \label{eq:deg_g}
    \deg\brac{g_{\eta},\inte(\Omega_{\eta}),0} = 1.
  \end{equation}
\end{corollary}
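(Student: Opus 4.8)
The plan is to use the homotopy invariance of the degree (Theorem~\ref{thm:deg_inv}) to deform $g_\eta$ into a simple inward-pointing vector field whose degree is manifestly equal to $1$. Since $\Omega_\eta$ is convex and compact with nonempty interior (Lemma~\ref{lem:lem2}), I would first fix an interior reference point $\xi^* \in \inte(\Omega_\eta)$ and define the straight-line homotopy
\begin{equation*}
  H(\xi, s) = (1-s)\, g_\eta(\xi) + s\,\brac{\xi^* - \xi}, \qquad s \in [0,1].
\end{equation*}
At $s=0$ this recovers $g_\eta$, and at $s=1$ it is the radial field $\xi \mapsto \xi^* - \xi$ pointing toward $\xi^*$.

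The central step is to verify that $H(\xi,s) \neq 0$ for every $\xi \in \partial\Omega_\eta$ and every $s \in [0,1]$, so that Theorem~\ref{thm:deg_inv} applies. This is where forward invariance enters. Because $\Omega_\eta$ is convex, compact and forward invariant, at each boundary point $\xi$ the vector field $g_\eta$ satisfies a subtangentiality (Nagumo) condition: for every outer normal $\nu$ to $\Omega_\eta$ at $\xi$ one has $\scalarp{g_\eta(\xi)}{\nu} \le 0$, since the flow does not exit the set. Moreover, since $\xi^*$ is interior and the set is convex, $\scalarp{\xi^*-\xi}{\nu} < 0$ strictly. Taking the inner product of $H(\xi,s)$ with such a $\nu$ gives $(1-s)\scalarp{g_\eta(\xi)}{\nu} + s\scalarp{\xi^*-\xi}{\nu}$, in which both terms are nonpositive and the second is strictly negative whenever $s>0$; for $s=0$ we have $H = g_\eta \neq 0$ on $\partial\Omega_\eta$ by hypothesis. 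Hence $\scalarp{H(\xi,s)}{\nu} < 0$ for $s>0$, which forces $H(\xi,s)\neq 0$ on the whole boundary for all $s\in[0,1]$.

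With the homotopy free of boundary zeros, homotopy invariance yields $\deg\brac{g_\eta, \inte(\Omega_\eta), 0} = \deg\brac{\xi^*-\xi, \inte(\Omega_\eta), 0}$. The radial field $F(\xi) = \xi^* - \xi$ has the single zero $\xi = \xi^* \in \inte(\Omega_\eta)$, with Jacobian $\tilde J = -I_r$; under the sign convention of Definition~\ref{def:topo-degree} its contribution is $\sign\brac{\det(-\tilde J)} = \sign\brac{\det I_r} = 1$. Therefore the right-hand degree equals $1$ and the claim follows.

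I expect the main obstacle to be the second step, namely converting forward invariance into the strict nonvanishing of $H$ on $\partial\Omega_\eta$. The delicate point is that $\Omega_\eta$ is the polyhedral set defined by $S\,\xi \ge -Z\,\eta$, so its boundary has corners where the outer normal is not unique; the subtangentiality condition must be applied to the full normal cone at such points, and one must confirm that the Nagumo inequality $\scalarp{g_\eta(\xi)}{\nu}\le 0$ holds for \emph{every} normal direction $\nu$ in that cone rather than for a single normal.
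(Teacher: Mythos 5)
Your proposal is correct and is essentially the paper's own proof: the paper likewise fixes an interior point $\bar\xi\in\inte(\Omega_\eta)$, uses the straight-line homotopy between $g_\eta$ and $\bar\xi-\xi$, rules out boundary zeros via convexity and forward invariance of $\Omega_\eta$ (Lemma~\ref{lem:lem2}), and concludes by Theorem~\ref{thm:deg_inv} with the same sign convention $\sign(\det(-\tilde J))=1$ for the radial field. Your Nagumo inner-product estimate $\scalarp{g_\eta(\xi)}{\nu}\le 0$ over the full normal cone is just a sharpened version of the paper's informal ``$g_\eta$ would point strictly outwards, contradicting forward invariance'' step, and it already resolves the corner issue you flag, since deriving that inequality from forward invariance works for every vector in the normal cone of the convex set.
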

\begin{proof}
 Since $f(k,x)$ is smooth on $\omega_{c_0}$, therefore  $g_{\eta}(k,\xi)$ is
  smooth on  $\Omega_{\eta}$. Let $k$ be fixed but arbitrary so that $g_{\eta}(k,\xi)=g_{\eta} (\xi)$.\\
  We have   $\Omega_{\eta} =\inte(\Omega_{\eta}) \cup \partial
  \Omega_{\eta} $, where $\inte(\Omega_{\eta})$ is the interior of
  $\Omega_{\eta}$ and $\partial \Omega_{\eta}$ is the boundary of
  $\Omega_{\eta}$.  By  Lemma~\ref{lem:lem2},   $\inte(\Omega_{\eta})$ is bounded.
We follow the proof  of \cite[Lemma 2]{DeSontag2007}. 
Let $\bar{\xi} \in \inte(\Omega_{\eta})$ be an arbitrary point and consider the  function
$$ G(\xi) = \bar{\xi} -\xi.$$
By Definition~\ref{def:topo-degree} it follows that 
\begin{equation}\label{eq:deg_G}
\deg(G, \inte(\Omega_\eta), 0)=1.
\end{equation} 
Next we show that $g_{\eta}$ and $G$ are homotopic. 
We define the following homotopy  $$H(\xi,s) =  s g_{\eta}(\xi) + (1-s) G(\xi)  $$ where $0 \leq s \leq 1$.
Therefore $H(\xi,s)$ is continuous on  $\Omega_{\eta} \times [0,1]$, $H(\xi,0)= G(\xi)$ and $H(\xi,1) = g_{\eta} (\xi)$.
To apply Theorem~\ref{thm:deg_inv}
we  need to show that $H(\xi,s) \ne 0$ for all $\xi \in \partial \Omega_{\eta}$ and for all $s \in [0,1]$. 
The latter is true if $s=0$ since $\bar{\xi} \in \inte(\Omega_\eta)$ and if  $s=1$ by Corollary~\ref{cor_num_eq}.
Suppose it is not true if $s \in (0,1)$, then there exists $\tilde{\xi} \in \partial \Omega$ and $\tilde{s} \in (0,1)$ such that 
$$ g_{\eta} (\tilde{\xi}) = -\frac{1-s}{s} G(\tilde{\xi}). $$
 By the convexity of $\Omega_{\eta}$ it follows that $G(\xi)$ points strictly 
inwards at $\xi=\tilde{\xi}$. Thus  $g_{\eta} (\xi)$ at $\xi =\tilde{\xi}$ points strictly outwards. This is a contradiction since 
$\Omega_{\eta}$ is forward invariant by Lemma~\ref{lem:lem2}. 
Thus the claim  in (\ref{eq:deg_g}) follows  by equation (\ref{eq:deg_G}) since the degree is
homotopy invariant by Theorem~\ref{thm:deg_inv}. 
\end{proof}

\begin{corollary}\label{cor:cor2}
 Let $\eta$ and $k$ be given 
    and note that $v\equiv v(k,x)$. Assume
    that the boundary $\partial \Omega_{\eta}$ does not contain any
    equilibria of (\ref{eq:xi_ode}).
  If $a_r(v,x(\eta,\xi))>0$ for all $\xi \in\inte(\Omega_{\eta})$, then
  the equation
  \begin{displaymath}
    g_{\eta}(v,\xi) = 0, \xi\in\Omega_{\eta}
  \end{displaymath}
  has a unique solution.
  
  If all 
  solutions of $g_{\eta}(v,\xi)=0$, $\xi\in \inte(\Omega_{\eta})$
  are regular, then the number of
  solutions
  in $ \inte(\Omega_{\eta})$ is odd.
\end{corollary}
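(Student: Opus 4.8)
The plan is to combine the two facts about the degree of $g_\eta$ on $\inte(\Omega_\eta)$ that are already at our disposal. On one hand, Corollary~\ref{coro:deg_gxi} gives $\deg\brac{g_\eta,\inte(\Omega_\eta),0}=1$. On the other hand, Lemma~\ref{lem:lem5} expresses this same degree as the sum of $\sign(a_r(v,x(\eta,\xi)))$ over the equilibria of (\ref{eq:xi_ode}), provided those equilibria are regular. The bridge between the hypotheses of the corollary and these two results is Lemma~\ref{lem:det_G_a6}, which identifies $a_r(v,x(\eta,\xi))$ with $\det\brac{-G_\eta(v,\xi)}$, the determinant of the negative Jacobian of $g_\eta$. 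Consequently the sign of $a_r$ at an equilibrium is precisely the local contribution of that equilibrium to the degree, and the condition $a_r\neq 0$ is exactly regularity there.

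For the first claim I would argue that the hypothesis $a_r(v,x(\eta,\xi))>0$ for all $\xi\in\inte(\Omega_\eta)$ forces $\det\brac{-G_\eta(v,\xi)}>0$ throughout the interior, so every point of $\inte(\Omega_\eta)$, and in particular every equilibrium, is regular. Since $\Omega_\eta$ is compact by Lemma~\ref{lem:lem2} and its boundary carries no equilibria by hypothesis, the equilibria form a finite set of isolated points, and Lemma~\ref{lem:lem5} applies. Each summand $\sign(a_r)$ then equals $+1$, so the degree equals the number of equilibria. Comparing with Corollary~\ref{coro:deg_gxi}, this number is exactly $1$, yielding both existence (the degree is nonzero) and uniqueness.

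For the second claim, the assumption that all solutions are regular is exactly what Lemma~\ref{lem:lem5} requires, so again $\deg\brac{g_\eta,\inte(\Omega_\eta),0}=\sum\sign(a_r(v,x(\eta,\xi)))$. Each term is now $\pm 1$; writing $p$ for the number of equilibria with $a_r>0$ and $q$ for the number with $a_r<0$, Corollary~\ref{coro:deg_gxi} gives $p-q=1$. The total count is therefore $p+q=(p-q)+2q=1+2q$, which is odd. I would close by transporting both conclusions back to the original system via Corollary~\ref{cor_num_eq}, which puts the equilibria of (\ref{eq:xi_ode}) in $\inte(\Omega_\eta)$ in bijection with those of (\ref{crm}) in $\inte(\omega_{c_0})$.

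I do not anticipate a genuine obstacle: once the degree is pinned to $1$ and the local sign is identified with $\sign(a_r)$ via Lemma~\ref{lem:det_G_a6}, both statements reduce to elementary counting. The one point deserving care is the applicability of Lemma~\ref{lem:lem5} in the first claim, namely verifying that everywhere-positivity of $a_r$ upgrades to regularity of \emph{all} equilibria and hence to finiteness of the solution set, rather than taking regularity as a separate hypothesis as is done in the second claim.
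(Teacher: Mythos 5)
Your proof is correct and follows essentially the same route as the paper's: you identify $a_r$ with $\det\brac{-G_\eta}$ via Lemma~\ref{lem:det_G_a6}, pin the degree to $1$ by Corollary~\ref{coro:deg_gxi}, and read off uniqueness and oddness from the sign-sum formula (\ref{eq:def_deg_g_eta}) of Lemma~\ref{lem:lem5}. The extra details you supply --- that positivity of $a_r$ itself forces regularity (hence isolation and finiteness) of all equilibria so that Lemma~\ref{lem:lem5} applies, and the explicit parity count $p-q=1$ giving $p+q=1+2q$ --- merely spell out steps the paper's terser proof leaves implicit.
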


\begin{proof}
  Recall that by Lemma~\ref{lem:det_G_a6}, $\det (-G_{\eta}) = a_r (v,
  x(\eta,\xi))$. Suppose that $a_r (v, x(\eta,\xi))>0 $ for all $\xi \in
  \inte(\Omega_{\eta})$. Then by  Corollary~\ref{coro:deg_gxi},
  $\deg(g_\eta,  \inte(\Omega_{\eta}), 0) =1$. Thus  by
  (\ref{eq:def_deg_g_eta}), it follows that the number of 
  equilibria in $\inte(\Omega_{\eta})$ equals one.

  If all equilibria $\xi^*$ are regular, then $a_r (v, x(\eta,\xi))$ is
  either positive or negative at $\xi=\xi^*$.
  Since  $\deg(g_\eta,  \inte( \Omega_{\eta}) ,0) =1$ by
  Corollary~\ref{coro:deg_gxi},  it follows by
  (\ref{eq:def_deg_g_eta}) that the number of equilibria has to be
  odd.
\end{proof}

  Now we turn to the system (\ref{crm}). In the next theorem we show that the system (\ref{crm}) has an  interior equilibrium solution in any set $\omega_{c_0}$
  where $c_0$ is fixed.
  \begin{theorem}
    Let $f(k,x)$ be as in (\ref{crm}) and 
    recall Assumption~\ref{ass3}. If
    \begin{displaymath}
      a_r(v,x) >0\text{, for all } v>0 \text{ and } x>0,
    \end{displaymath}
    then
    \begin{displaymath}
      f(k,x) = 0, x\in \omega_{c_0}
    \end{displaymath}
    has a unique solution for all $c_0>0$. 
  \end{theorem}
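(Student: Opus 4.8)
The plan is to transfer the problem from the full system (\ref{crm}) on the level set $\omega_{c_0}$ to the reduced $r$-dimensional system (\ref{eq:xi_ode}) on $\Omega_{\eta_0}$, apply the uniqueness part of Corollary~\ref{cor:cor2} there, and pull the equilibrium count back using Corollary~\ref{cor_num_eq}. First I would fix $c_0>0$; by Assumption~\ref{ass2} and Lemma~\ref{lem:Om_forward_invariant} the set $\omega_{c_0}$ is nonempty, compact and convex, and its interior consists of strictly positive concentration vectors. Choosing any $x_0\in\omega_{c_0}$ and setting $\eta_0=Z^T x_0$, the discussion following (\ref{eq:xi_ode}) together with Lemma~\ref{lem:lem3} shows that equilibria of (\ref{crm}) in $\omega_{c_0}$ correspond bijectively to equilibria of $\dot\xi=g_{\eta_0}(v,\xi)$ in $\Omega_{\eta_0}$ via $x\equiv x(\xi,\eta_0)=S\,\xi+Z\,\eta_0$. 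By Assumption~\ref{ass3} and Corollary~\ref{cor_num_eq}, the boundary $\partial\Omega_{\eta_0}$ carries no equilibrium of (\ref{eq:xi_ode}), so the boundary hypothesis of Corollary~\ref{cor:cor2} is already met.

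The key step is to convert the global positivity hypothesis $a_r(v,x)>0$ for all $v>0$, $x>0$ into the pointwise hypothesis actually required by Corollary~\ref{cor:cor2}, namely $a_r(v,x(\eta_0,\xi))>0$ for every $\xi\in\inte(\Omega_{\eta_0})$. For this I would argue that every interior point $\xi\in\inte(\Omega_{\eta_0})$ satisfies the strict inequality $S\,\xi>-Z\,\eta_0$ in the definition (\ref{eq:def_Om_xi}), so the associated concentration vector $x(\xi,\eta_0)=S\,\xi+Z\,\eta_0>0$ is strictly positive. Consequently each rate $v_j(k,x)=k_j\,x_1^{\alpha_{1j}}\cdots x_n^{\alpha_{nj}}>0$ (recall $k>0$), whence $v>0$ as well. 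Therefore the global hypothesis applies at each such pair $(v,x)$ and yields $a_r(v,x(\eta_0,\xi))=\det\brac{-G_{\eta_0}(v,\xi)}>0$ throughout $\inte(\Omega_{\eta_0})$, where the identity is Lemma~\ref{lem:det_G_a6}.

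With both hypotheses of Corollary~\ref{cor:cor2} verified, that corollary gives a unique solution of $g_{\eta_0}(v,\xi)=0$ in $\Omega_{\eta_0}$, and this solution lies in $\inte(\Omega_{\eta_0})$ since the boundary has no equilibria. Finally, Corollary~\ref{cor_num_eq} equates the number of equilibria of (\ref{crm}) in $\inte(\omega_{c_0})$ with the number in $\inte(\Omega_{\eta_0})$, which is exactly one, while Assumption~\ref{ass3} excludes any further equilibrium on $\partial\omega_{c_0}$; hence $f(k,x)=0$, $x\in\omega_{c_0}$, has a unique solution. I expect the main obstacle to be the middle step, namely cleanly justifying that interior points of $\Omega_{\eta_0}$ correspond to strictly positive concentrations, and hence strictly positive rates, since this is precisely what bridges the theorem's global hypothesis to the pointwise hypothesis of Corollary~\ref{cor:cor2}; the remaining arguments are bookkeeping with the already-established corollaries and the equivalence between the full and reduced systems.
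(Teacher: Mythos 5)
Your proposal is correct and follows essentially the same route as the paper's own proof: pass to the reduced system via $\eta_0=Z^T x_0$, invoke Corollary~\ref{cor:cor2} for uniqueness in $\Omega_{\eta_0}$, and pull the count back through Lemma~\ref{lem:lem3} and Corollary~\ref{cor_num_eq}. Your middle step---showing that interior points $\xi\in\inte(\Omega_{\eta_0})$ give $x(\xi,\eta_0)>0$ and hence $v(k,x)>0$, so the global hypothesis yields the pointwise one---is in fact a more careful rendering of a step the paper compresses into a single ``by assumption'' line, so no discrepancy in substance.
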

  \begin{proof}
    Pick $c_0>0$ and choose any $x_0\in\omega_{c_0}$. Compute $\xi_0 =
    S^T\, x_0$, $\eta_0=Z^T\, x_0$. By assumption
    \begin{displaymath}
      a_r(v,x(\eta_0,\xi)) >0, \forall v>0 \text{ and } \forall
      \xi\in\Omega_{\eta_0}. 
    \end{displaymath}
    Thus, by Corollary~\ref{cor:cor2}, $g_{\eta_0}(k,\xi)$,
    $\xi\in\Omega_{\eta_0}$ has a unique solution $\xi^*$.  And by 
    Lemma~\ref{lem:lem3}  (a)-(b),  an equilibrium  $x^*=S\, \xi^*+Z\, \eta_0$ of 
    $\dot{x}= f(k,x^*)$, $x^*\in \inte (\omega_{c_0})$ is unique.
      \end{proof}

\section{The bipartite digraph of a biochemical mechanism}\label{sec:bip_gr}

For the convenience of the reader,  in
this section we present  definitions regarding the bipartite digraph of a biochemical mechanism (\ref{cr})   \cite{VolpertIvanova,Mincheva2007}. 
To illustrate the definitions  we will continue to use as an  example the mechanism (\ref{eq:br_cyc}).

A  {\it directed  bipartite  graph}  (bipartite digraph) has a node set that consists of two disjoint  subsets,
$V_1$ and $V_2$, and each of its directed edges (arcs) has one end in 
$V_1$ and the other in $V_2$ \cite{fh69}.

The {\it bipartite digraph} $G$ of a biochemical reaction network (\ref{cr})
is defined as follows.
The nodes are separated  into two sets, one for the chemical
species $V_1 =\{ A_1, A_2,\ldots, A_n \}$ and one for the elementary  reactions
$V_2 = \{ B_1, B_2,\ldots, B_m\}$. 
We draw an arc   from $A_k$ to $B_j$ if and only if species $A_k$ is a
reactant in reaction $j$, i.e., \ if  the stoichiometric coefficient  $\alpha_{kj}>0$ in (\ref{cr}).
Similarly, we draw an arc  from $B_j$ to $A_i$ if and only if $A_i$ is a
product in reaction $j$, i.e., \ if  the stoichiometric coefficient $\beta_{ij}>0$ in (\ref{cr}). 
Therefore the  set of arcs $E(G)$ consists of arcs such as  $(A_k,
B_j)$ and $(B_j, A_i)$. 
Hence the  bipartite digraph can be defined as $G= \{ V, E(G)\}$ where $V=V_1 \cup V_2$ is the set of nodes and  $E(G)$ is the set of arcs. If an arc is not weighted explicitly, we assume that its weight equals $1$.
The  corresponding  bipartite digraph of the   mechanism (\ref{eq:br_cyc})  is  shown in Figure \ref{fig:br_cyc}.

 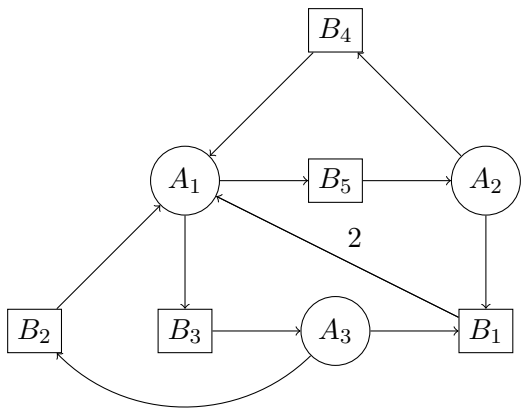
\begin{figure}
\centering
 \begin{tikzpicture}[scale=1,auto]
    \node  [shape=circle,draw]  (n1) at (10,12)  {$A_1$};
     \node  [shape=rectangle,draw]  (b1) at (12,12)  {$B_5$};
     \node [shape=rectangle,draw]   (b6) at  (12,14) {$B_4$};
   \node  [shape=circle,draw]  (n2) at (14,12)  {$A_2$};
      \node  [shape=rectangle,draw]  (b3) at (14,10)  {$B_1$};
   \node   [shape=circle,draw]  (n4) at (12,10)  {$A_3$};
      \node  [shape=rectangle,draw]  (b4) at (10,10)  {$B_{3}$};
     \node  [shape=rectangle,draw]  (b7) at (8,10)  {$B_{2}$};
       \draw [->] (n1)  to    node {}   (b1);
                   \draw [->] (b1)  to    node {}   (n2);
                    \draw [->] (n2)  to    node {}   (b3);
                    \draw [->] (b3)  to    node {}   (n1);
                     \draw [] (n1)  to    node {2}   (b3);
                    \draw [->] (n4)  to    node {}   (b3);
                    \draw [->] (n1)  to    node {}   (b4);
                    \draw [->] (n2)  to    node {}   (b6);
                     \draw [->] (b6)  to    node {}   (n1);
                    \draw [->] (b4)  to    node {}   (n4);
                     \draw [->] (n4)  to [bend left=45]   node {}   (b7);
                           \draw [->] (b7) to node {}    (n1); 
 \end{tikzpicture}
\caption{Bipartite graph of the reaction mechanism (\ref{eq:br_cyc}).}
\label{fig:br_cyc}
\end{figure}

The element $[A_k, B_j]$ is an {\it edge}  if    
$\alpha_{kj}  > 0$, i.e., if species $A_k$ is a reactant in reaction $j$. The {\it weight of an  edge} $E= [A_k,B_j]$ is defined as 
\begin{equation}\label{ke}
K_E = -\alpha^{2}_{kj}.
\end{equation}
For example, the edge $E=[A_1,B_5]$ corresponding to the arc $(A_1,B_5)$ in Figure~\ref{fig:br_cyc} has weight $K_E =-1$. 

 If $\alpha_{kj} \beta_{ij} > 0$,  then
the arcs $(A_k, B_j)$ and $(B_j, A_i)$ form a {\it positive path}
 $[A_k ,B_j ,A_i]$ that  corresponds to the production of
$A_i$ from $A_k$ in a reaction $j$. 
The weight of the positive path $[A_k ,B_j ,A_i]$ is defined as $\alpha_{kj} \beta_{ij}$.
For example, the positive path $[A_1,B_5,A_2]$ in  Figure \ref{fig:br_cyc} has weight $1$. 

If $\alpha_{kj}\alpha_{ij} > 0$, then 
the  arcs $(A_k, B_j)$ and $(A_i, B_j)$ form a
{\it negative path} $[\overline{A_k, B_j, A_i}]$ that 
corresponds to $A_k$ and $A_i$ interacting as reactants in reaction $j$.
The weight of the negative path $[\overline{A_k, B_j, A_i}]$ is defined as  $-\alpha_{kj} \alpha_{ij}$.
Note that the negative paths $[\overline{A_k, B_j, A_i}]$ and $[\overline{A_i, B_j,A_k}]$ are considered to be different since they start at a different
species node.  For example, both $[\overline{A_2,B_1,A_3}]$ and  $[\overline{A_3,B_1,A_2}]$ in Figure \ref{fig:br_cyc} are negative paths with weight $-1$. We note that the direction of the arcs is followed in the positive paths but not in the  negative paths.

A {\it cycle} $C$ of $G$ is a 
sequence of distinct paths with the last  species 
node of each path being the same as the first species node  of the next
path $ C = \{ (A_{i_1},B_{j_1}, A_{i_2})$, $(A_{i_2}, B_{j_2}, A_{i_3})$,$\ldots$,
$(A_{i_{k-1}}, B_{j_{k-1}}, A_{i_k})$, $(A_{i_k}, B_{j_k}, A_{i_1}) \}$.
A cycle will be denoted by $C=\binom{A_{i_1},
A_{i_2},\ldots,A_{i_k}}{B_{j_1}, B_{j_2},\ldots, B_{j_k}}$, 
where the number of species nodes defines its {\it order}.
The set of species nodes in a cycle   is distinct, but there may be a repetition among the 
reaction nodes. This is because negative paths containing the same nodes are considered 
different depending on the starting species node.  
For example, $C=\binom{A_{2},
A_{3}}{B_{1}, B_{1}}$ in Figure \ref{fig:br_cyc} is a cycle formed by the two negative paths $[\overline{A_2, B_1, A_3}]$ and $[\overline{A_3, B_1, A_2}]$.  

\begin{figure}
 \centering
 \begin{tikzpicture}[scale=1,auto]
    \node  [shape=circle,draw]  (n1) at (10,12)  {$A_1$};
     \node  [shape=rectangle,draw]  (b1) at (12,12)  {$B_5$};
   \node  [shape=circle,draw]  (n2) at (14,12)  {$A_2$};
      \node  [shape=rectangle,draw]  (b3) at (14,10)  {$B_1$};
       \draw [->] (n1)  to    node {}   (b1);
                           \draw [->] (b1)  to    node {}   (n2);
                    \draw [->] (n2)  to    node {}   (b3);
                    \draw [->] (b3)  to    node {}   (n1);
                     \draw [] (n1)  to    node {2}   (b3);
 \end{tikzpicture}
\caption{Cycle   $C \binom{A_{1}, A_{2}}{B_{5}, B_{1}}$  of the bipartite graph of the reaction mechanism (\ref{eq:br_cyc}).}
\label{fig:cyc_br_cyc}
\end{figure}
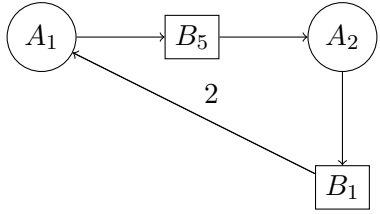

A cycle is {\it  positive} if it contains an even number  of
negative paths and {\it negative} if it contains an odd number of
negative paths. The  sign of a   cycle $C$ can also be determined by the {\it cycle weight}
which is a product of all corresponding weights of  negative and positive paths  of $C$
\begin{equation}\label{kc}
K_C = \prod_{\overline{[A_k, B_j, A_i]} \in C}( - \alpha_{kj} \alpha_{ij}) \prod_{[A_k, B_j , A_i] \in C} \alpha_{kj} \beta_{ij}.
\end{equation}
For example,   $C=\binom{A_{1}, A_{2}}{B_{5}, B_{1}}$ in   Figure \ref{fig:cyc_br_cyc}  is a  positive cycle of order $2$ with weight $K_C=2$.

A {\it subgraph}  $g =\{\text{L}_1 , \text{L}_2 ,\ldots, \text{L}_s  \}$ of $G$
consists of edges or cycles $\text{L}_i$, $i=1,\ldots , s$, 
where each species  is the beginning of 
only one edge,  or one  path participating in a cycle.
The number of species nodes   in a subgraph is defined as  its {\it  order}.
The {\it subgraph weight} is defined as 
\begin{equation}\label{kg}
K_g = (-1)^c \prod_{C \in g} K_C \prod_{E\in g} (-K_E),
\end{equation}
 where $c$ is the number of cycles in $g$, $K_C$ is  the cycle weight (\ref{kc}) and $K_E$ is the edges weights (\ref{ke}) of the cycles and edges in $g$. 
 For example, the subgraph $g =\{ [A_3, B_2],  \binom{A_{1}, A_{2}}{B_{1}, B_{5}} \}$ 
 with weight $K_g=-2$ is shown in  Figure \ref{fig:sub_br_cyc} .

Since more than one path can exist between species nodes via different reaction nodes 
in a bipartite digraph, the number of subgraphs through the same node sets may be greater than one.
The set of all subgraphs $g$ of order $k$ with the
same species nodes $\bar{V}_1 = \{ A_{i_1}, \ldots A_{i_k} \}$ and  reaction nodes 
$\bar{V}_2 = \{ B_{j_1}, \ldots B_{j_k} \}$  sets  is called a 
{\it fragment} of order $k$ and is
denoted by $S_k\binom{i_1,\ldots,i_k}{j_1,\ldots,j_k}$. 
For a fragment $S_k \binom{i_1,\ldots,i_k}{j_1,\ldots,j_k}$
we define the number
 \begin{equation}\label{ksk}
 K_{S_k}=\sum_{g \in S_k}  K_{g}
 \end{equation}
 as the {\it fragment weight}. 
If $K_{S_k} <0$, then $S_k$ is defined as a {\it  critical fragment}.

For example, the fragment $S_2\binom{1, 2}{5,1}$ shown in
Figure~\ref{fig:cf_br_cyc}  together with its two
subgraphs 
$g_1 =C_2=\binom{A_1,  A_2}{B_5, B_1}$ and
$g_2=\{[A_1, B_5], [A_2, B_1]\}$.
The  first  subgraph $g_1$ is a positive cycle and thus it has a negative weight.  Therefore 
$S_2 \binom{1, 2}{5,1}$ is a critical fragment since 
$$K_{S_2}= \sum_{g \in S_2} K_g=K_{g_1}+K_{g_2}= -2+1=-1<0.$$  

 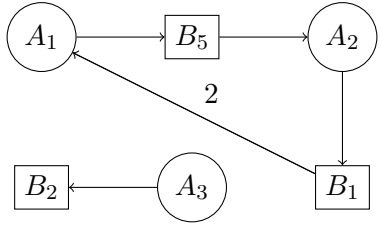
\begin{figure}
\centering
 \begin{tikzpicture}[scale=1,auto]
    \node  [shape=circle,draw]  (n1) at (10,12)  {$A_1$};
     \node  [shape=rectangle,draw]  (b1) at (12,12)  {$B_5$};
   \node  [shape=circle,draw]  (n2) at (14,12)  {$A_2$};
      \node  [shape=rectangle,draw]  (b3) at (14,10)  {$B_1$};
   \node   [shape=circle,draw]  (n4) at (12,10)  {$A_3$};
      \node  [shape=rectangle,draw]  (b4) at (10,10)  {$B_{2}$};
       \draw [->] (n1)  to    node {}   (b1);
                   \draw [->] (b1)  to    node {}   (n2);
                    \draw [->] (n2)  to    node {}   (b3);
                    \draw [->] (b3)  to    node {}   (n1);
                     \draw [] (n1)  to    node {2}   (b3);
                            \draw [->] (n4)  to    node {}   (b4);
 \end{tikzpicture}
\caption{Subgraph   $g =\{ [A_3, B_2],  \binom{A_{1}, A_{2}}{B_{5}, B_{1}} \}$  of the bipartite graph of the reaction mechanism (\ref{eq:br_cyc}).}
\label{fig:sub_br_cyc}
\end{figure}

In \cite{VolpertIvanova,Mincheva2007} it is shown that the coefficients
of the characteristic polynomial  (\ref{eq:cp}) have the following graph-theoretic representation.
\begin{theorem}\label{thm1}
A coefficient of the characteristic polynomial of the Jacobian (\ref{jac}) can be written as  
\begin{equation} \label{fs}
a_k (v,x) = \sum_{S_k \binom{i_1,\ldots,i_k}{j_1,\ldots,j_k}} K_{S_k} 
\frac{v_{j_1} \ldots v_{j_k} }{x_{i_1}\ldots x_{i_k}} , \quad k=1, \ldots , n .
\end{equation} 
where $S_k \binom{i_1,\ldots,i_k}{j_1,\ldots,j_k}$ is a fragment of order $k$ and $K_{S_k}$ is the fragment's  weight.
\end{theorem}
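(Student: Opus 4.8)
The plan is to derive (\ref{fs}) from the algebraic meaning of the coefficient together with a term-by-term matching against the subgraphs of $G$. Recall from the discussion following (\ref{eq:cp}) that $a_k(v,x)$ is the sum of all $k\times k$ principal minors of $-J(v,x)$, so $a_k=\sum_{|I|=k}\det\bigl((-J)_{I,I}\bigr)$, the sum running over $k$-element index sets $I=\{i_1,\dots,i_k\}$. I would expand each principal minor by the Leibniz formula, $\det\bigl((-J)_{I,I}\bigr)=\sum_{\pi}\sign(\pi)\prod_{i\in I}(-J_{i,\pi(i)})$ over permutations $\pi$ of $I$, insert the entry (\ref{jac}) in the form $-J_{i,\pi(i)}=-\sum_j N_{ij}\alpha_{\pi(i)j}\,v_j/x_{\pi(i)}$, and select one reaction $j_i$ in each of the $k$ factors. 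Since $\prod_i x_{\pi(i)}=x_{i_1}\cdots x_{i_k}$, every elementary term carries the monomial $v_{j_1}\cdots v_{j_k}/(x_{i_1}\cdots x_{i_k})$, which depends only on $I$ and on the multiset $\{j_i\}$; factoring out the global $(-1)^k$ from the $k$ leading signs and writing $N_{ij}=\beta_{ij}-\alpha_{ij}$ splits each remaining factor into a $\beta$-part $\beta_{ij_i}\alpha_{\pi(i)j_i}$ and an $\alpha$-part $-\alpha_{ij_i}\alpha_{\pi(i)j_i}$.

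Next I would identify each elementary term with a subgraph $g$ on the species $I$. Decomposing $\pi$ into disjoint cycles, a cycle of length $\ge 2$ produces a cycle of $G$ of the same order, where for species $i$ on it the $\beta$-choice contributes the positive path $[A_{\pi(i)},B_{j_i},A_i]$ of weight $\alpha_{\pi(i)j_i}\beta_{ij_i}$ and the $\alpha$-choice contributes the negative path $[\overline{A_i,B_{j_i},A_{\pi(i)}}]$ of weight $-\alpha_{ij_i}\alpha_{\pi(i)j_i}$; a fixed point $\pi(i)=i$ produces, under the $\beta$-choice, a self-loop cycle $[A_i,B_{j_i},A_i]$ and, under the $\alpha$-choice, the edge $[A_i,B_{j_i}]$. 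With this dictionary the product of the chosen path weights along one cycle of $G$ is precisely the cycle weight $K_C$ of (\ref{kc}), and each edge factor equals $-\alpha_{ij_i}^2=K_E$ by (\ref{ke}); since (\ref{kg}) carries a factor $-K_E$ per edge, the product of all selected factors equals $(-1)^{e}\prod_C K_C\prod_E(-K_E)$, where $e$ is the number of edges of $g$.

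The decisive step is the sign. Let $q$ be the number of cycles of $\pi$ (fixed points included), so $\sign(\pi)=(-1)^{k-q}$, and let $c$ and $e$ be the numbers of cycles and edges of the associated subgraph $g$. Writing $a$ for the number of $\pi$-cycles of length $\ge 2$ and splitting the $f$ fixed points into $f_+$ carrying the $\beta$-choice and $e=f-f_+$ carrying the $\alpha$-choice, we have $q=a+f$ and $c=a+f_+$. Combining the permutation sign, the global $(-1)^k$, and the factor $(-1)^e$ from the edges gives $\sign(\pi)(-1)^k(-1)^e=(-1)^{e-q}=(-1)^{-(a+f_+)}=(-1)^c$. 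Hence each elementary term equals $(-1)^c\prod_C K_C\prod_E(-K_E)$ times its monomial, i.e. exactly $K_g$ from (\ref{kg}) times $v_{j_1}\cdots v_{j_k}/(x_{i_1}\cdots x_{i_k})$.

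Finally, because distinct triples consisting of a permutation $\pi$, a reaction assignment $(j_i)$, and a $\beta$-or-$\alpha$ choice in each factor correspond bijectively to distinct subgraphs $g$ on $I$, I would collect all subgraphs sharing the same species nodes $\{i_1,\dots,i_k\}$ and reaction nodes $\{j_1,\dots,j_k\}$. Their common monomial factors out, and the sum of their weights is the fragment weight $K_{S_k}=\sum_{g\in S_k}K_g$ of (\ref{ksk}); summing over all species sets and reaction multisets then reproduces (\ref{fs}). I expect the main obstacle to be exactly the sign bookkeeping above, together with the care needed to set up the bijection cleanly --- in particular the splitting of a fixed point into an edge versus a self-loop cycle, the orientation and weight matching of the positive and negative paths, and the extra sign per edge. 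An alternative that makes the distinct-reaction case transparent is to factor $J=N\diag(v)E\,X^{-1}$ with $E_{jl}=\alpha_{lj}$ and $X=\diag(x)$, and apply the Cauchy--Binet formula to $\det\bigl((N\diag(v)E)_{I,I}\bigr)$, which yields $K_{S_k}=(-1)^k\det(N_{I,J})\det(E_{J,I})$; this route, however, needs a separate argument that fragments with repeated reactions contribute zero.
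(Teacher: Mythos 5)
Your proposal is correct, but note that the paper itself does not prove Theorem~\ref{thm1} at all: it quotes the result from the references cited just before the statement, so there is no in-paper proof to match. Your argument is essentially a reconstruction of the standard proof from those sources: expand $a_k$ as the sum of $k\times k$ principal minors of $-J$, apply the Leibniz formula, use multilinearity in the reaction index via (\ref{jac}), split $N_{ij}=\beta_{ij}-\alpha_{ij}$, and match each nonzero elementary term with a subgraph on the chosen species set. Your sign bookkeeping is right: with $q=a+f$ cycles of $\pi$, $c=a+f_+$ graph cycles and $e=f-f_+$ edges, one indeed gets $\sign(\pi)(-1)^k(-1)^e=(-1)^{e-q}=(-1)^c$, so each term contributes exactly $K_g$ times the monomial, and grouping by common species and reaction node sets yields $K_{S_k}$ via (\ref{ksk}).

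One convention in your dictionary should be flipped, precisely at the spot you flag as delicate. You assign to factor $i$ the positive path $[A_{\pi(i)},B_{j_i},A_i]$ (from $\pi(i)$ to $i$) but the negative path $[\overline{A_i,B_{j_i},A_{\pi(i)}}]$ (from $i$ to $\pi(i)$). With these opposite orientations, a $\pi$-cycle mixing $\beta$- and $\alpha$-choices produces paths that do not chain head-to-tail, and some species begin two paths while others begin none, violating the paper's definitions of cycle and subgraph; e.g.\ for the transposition $\pi=(1\,2)$ with a $\beta$-choice at factor $1$ and an $\alpha$-choice at factor $2$, both paths begin at $A_2$. The fix is to orient the negative path for factor $i$ as $[\overline{A_{\pi(i)},B_{j_i},A_i}]$, which is legitimate since arc direction is not followed in negative paths and the weight $-\alpha_{\pi(i)j_i}\alpha_{ij_i}$ is symmetric in the endpoints; then every path for factor $i$ runs from $\pi(i)$ to $i$, each species begins exactly one edge or path, and the triple-to-subgraph map becomes the clean bijection you invoke (restricted to nonzero terms, since a $\beta$-choice with $\beta_{ij_i}=0$ or a factor with $\alpha_{\pi(i)j_i}=0$ vanishes and corresponds to no element of $G$). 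Your closing Cauchy--Binet remark is also sound, and the ``separate argument'' it needs is short: the coefficient of a monomial with a repeated reaction index is a determinant with repeated columns, hence zero, which is consistent with the fragment picture, where the subgraph weights of a repeated-reaction fragment cancel in the sum (\ref{ksk}).
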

Note that similar terms in $a_k(v,x)$ have been combined using summation over the  subgraphs of a fragment (\ref{ksk}),    and (\ref{fs}) 
is in a simplified form.  
It follows by (\ref{fs}) that  the correspondence between a fragment $S_k\binom{i_1,\ldots,i_k}{j_1,\ldots,j_k}$ and a non-zero term in $a_k(v,x)$  is one-to-one.  For example, the first negative term in the coefficient  (\ref{eq:ex2})
corresponds to the critical fragment $S_2 \binom{1,2}{5,1}$ shown in Figure \ref{fig:cf_br_cyc}. 

The next corollary  follows immediately by Theorem \ref{thm1}.

\begin{corollary}\label{cor:cor3}
   Recall the function $f$ from (\ref{crm}) with Jacobian \dred{$J$} as
    in (\ref{jac}).
  The last (not identically zero)
  coefficient of the characteristic polynomial (\ref{eq:cp})
   can be written as
    \begin{equation}
      \label{gtrep-coef11}
      a_{r} (v , x) =  \sum_{{S_r \binom{i_1, i_2, \ldots ,i_r }{j_1,
            j_2,\ldots j_r}} \in G} K_{S_r } \frac{v_{j_1} \ldots
        v_{j_r}}{x_{i_1} \ldots x_{i_r} },
    \end{equation}
  where $r$ is the rank of the stoichiometric matrix $N$. 
\end{corollary}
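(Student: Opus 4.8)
The plan is to obtain this directly from Theorem~\ref{thm1} by specializing the index $k$ to the rank $r$. First I would recall the structure of the characteristic polynomial established in (\ref{eq:cp}): under Assumption~\ref{ass4} we have $\im(J(v,x))=\im(N)$, so $\operatorname{rank}(J(v,x))\le r$ and the characteristic polynomial factors as $\lambda^{n-r}q(\lambda)$ with $q(\lambda)=\lambda^r+a_1\lambda^{r-1}+\cdots+a_{r-1}\lambda+a_r$. This identifies $a_r$ as the constant term of the nontrivial factor $q$, and hence as the last coefficient of the full characteristic polynomial that is not forced to vanish identically: every coefficient $a_k$ with $k>r$ is identically zero because of the factor $\lambda^{n-r}$.

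Next I would apply Theorem~\ref{thm1}, which expresses each coefficient $a_k(v,x)$, $k=1,\ldots,n$, as a sum over fragments of order $k$, weighted by the fragment weights $K_{S_k}$ and the monomials $v_{j_1}\cdots v_{j_k}/(x_{i_1}\cdots x_{i_k})$. Specializing this representation to $k=r$ yields exactly (\ref{gtrep-coef11}), where the sum now runs over all fragments $S_r\binom{i_1,\ldots,i_r}{j_1,\ldots,j_r}$ of order $r$ in the bipartite digraph $G$. Since similar terms have already been combined through the fragment-weight summation (\ref{ksk}), the correspondence between fragments of order $r$ and the nonzero terms of $a_r(v,x)$ is one-to-one, which completes the identification.

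I do not anticipate any genuine obstacle here, since the statement is a direct specialization of Theorem~\ref{thm1}. The only point deserving care is the bookkeeping that $a_r$, defined graph-theoretically through the $k=r$ case, indeed coincides with the constant coefficient of $q(\lambda)$ in (\ref{eq:cp}); this is immediate from the factorization together with the convention that $a_k$ is the sum of all principal minors of order $k$ of $-J(v,x)$, so that the two readings of the symbol $a_r$ agree.
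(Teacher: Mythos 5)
Your proposal is correct and matches the paper's own reasoning: the paper proves Corollary~\ref{cor:cor3} simply by noting it follows immediately from Theorem~\ref{thm1} specialized to $k=r$, with the factorization (\ref{eq:cp}) under Assumption~\ref{ass4} guaranteeing that $a_k$ vanishes identically for $k>r$, exactly as you argue. Your additional care about the two readings of $a_r$ (graph-theoretic sum versus constant term of $q(\lambda)$, i.e., the sum of principal minors of order $r$ of $-J(v,x)$) is sound and consistent with the paper's conventions.
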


  \begin{remark}
    Recall that $x\equiv x(\nu,\xi)$. Hence, for fixed values of
    $\eta$, one may consider $a_r(v,x)$ as a function of $\xi$. 
    We further note that $a_r(v,x)$ depends on $k$, as $v\equiv
    v(k,x)$. 
  \end{remark}

 \begin{figure}
\centering
 \begin{tikzpicture}[scale=1,auto] 
    \node  [shape=circle,draw]  (n1) at (6,12)  {$A_1$};
     \node  [shape=rectangle,draw]  (b1) at (8,12)  {$B_5$};
   \node  [shape=circle,draw]  (n2) at (10,12)  {$A_2$};
      \node  [shape=rectangle,draw]  (b3) at (10,10)  {$B_1$};
        \draw [->] (n1)  to    node {}   (b1);
                            \draw [->] (b1)  to    node {}   (n2);
                    \draw [->] (n2)  to    node {}   (b3);
                    \draw [->] (b3)  to    node {}   (n1);
                     \draw [] (n1)  to    node {2}   (b3);
                      \node  [shape=circle,draw]  (n1) at (12,12)  {$A_1$};
     \node  [shape=rectangle,draw]  (b1) at (14,12)  {$B_5$};
   \node  [shape=circle,draw]  (n2) at (16,12)  {$A_2$};
      \node  [shape=rectangle,draw]  (b3) at (16,10)  {$B_1$};
        \draw [->] (n1)  to    node {}   (b1);
                            \draw [->] (b1)  to    node {}   (n2);
                    \draw [->] (n2)  to    node {}   (b3);
                    \draw [->] (b3)  to    node {}   (n1);
                     \draw [] (n1)  to    node {2}   (b3);
                      \node  [shape=circle,draw]  (n1) at (18,12)  {$A_1$};
     \node  [shape=rectangle,draw]  (b1) at (20,12)  {$B_5$};
      \draw [->] (n1)  to    node {}   (b1);
 \node  [shape=circle,draw]  (n2) at (18,10)  {$A_2$};
      \node  [shape=rectangle,draw]  (b3) at (20,10)  {$B_1$};
       \draw [->] (n2)  to    node {}   (b3);
            \node [ ] at (7,10) {(a)};
            \node [ ] at (13,10) {(b)};
           \node [ ] at (17,10) {(c)};
 \end{tikzpicture}
\caption{The critical fragment $S_2 \binom{1,2}{5,1}$  (shown in (a)) together with its two subgraphs  $g_1=C_2= \binom{A_{1}, A_{2}}{B_{5}, B_{1}} \}$ (shown in (b))and $g_2 =\{ [A_1,B_5], [A_2,B_1]\}$ (shown in (c)).}
\label{fig:cf_br_cyc}
\end{figure}
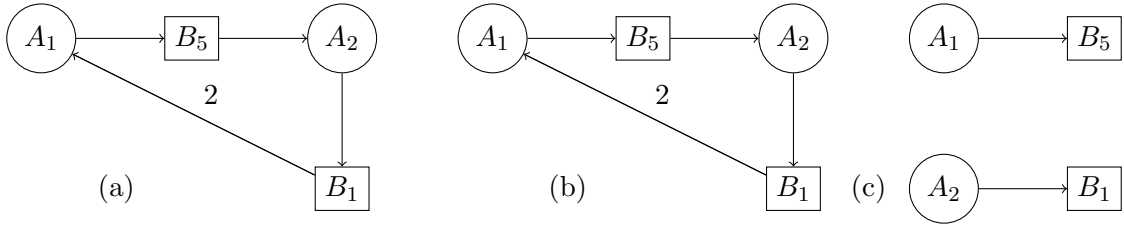

\section{Multistationarity}\label{sec:multi}

This section contains the main result of the paper.  First we will show
 that  the existence of multistationarity requires a negative term in $a_r (v,x)$
given by (\ref{gtrep-coef11}). Thus a critical fragment of order $r$
corresponding uniquely  to a negative term in $a_r (v,x) $ must be
present in the bipartite graph of a  conservative
biochemical mechanism model showing multistationary.

We have  that the coefficient $a_r(v,x)$ contains at least one
positive term  that corresponds to a product of positive diagonal
entries of the negative Jacobian  $-J(x,v)$. This is true since we
assumed (Sec.~\ref{prel}, Assumption~\ref{ass1}) that   each species participates as a
reactant (and a product)  in at least one reaction. 

\begin{theorem}\label{thm2}
    Recall the function $g_\eta$ from (\ref{eq:def_g_eta}). Let $\eta$
    and $k$ be given such that
    $g_\eta(k,\xi) = 0$, $\xi\in\Omega_\eta$
    has more that one solution. 
    Under Assumption~\ref{ass3}, if  
    all solutions are regular, then
     $a_r(v,x)$ contains a negative term.
  \end{theorem}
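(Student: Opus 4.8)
The plan is to read off the sign information of $a_r$ at the equilibria directly from the Brouwer degree, and then translate a single negative sign into a negative coefficient by means of the graph-theoretic formula of Corollary~\ref{cor:cor3}. First I would invoke Corollary~\ref{coro:deg_gxi} to record that $\deg\brac{g_\eta,\inte(\Omega_\eta),0}=1$, which applies because $\Omega_\eta$ is compact, convex and forward invariant and $\partial\Omega_\eta$ carries no equilibria (guaranteed here by Assumption~\ref{ass3} together with Corollary~\ref{cor_num_eq}). Since by hypothesis every solution of $g_\eta(k,\xi)=0$ in $\Omega_\eta$ is regular and, by Corollary~\ref{cor_num_eq}, lies in the interior, Lemma~\ref{lem:lem5} lets me expand this degree as
\[
1=\deg\brac{g_\eta,\inte(\Omega_\eta),0}=\sum_{\{\xi^*\,:\,g_\eta(\xi^*)=0\}}\sign\brac{a_r\brac{v,x(\eta,\xi^*)}}.
\]

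The combinatorial heart of the argument is then elementary. By regularity together with Lemma~\ref{lem:det_G_a6}, one has $a_r\brac{v,x(\eta,\xi^*)}=\det\brac{-G_\eta}\neq 0$ at each equilibrium, so every summand equals $+1$ or $-1$. If all of these signs were $+1$, the sum would equal the number of equilibria, which exceeds $1$ by hypothesis, contradicting that the sum equals $1$. Hence at least one equilibrium $\xi^*$ must satisfy $a_r\brac{v,x(\eta,\xi^*)}<0$.

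It remains to pass from negativity at a point to the existence of a negative \emph{term}. At the chosen interior equilibrium the concentrations $x^*=x(\eta,\xi^*)$ are strictly positive (there are no boundary equilibria, by Assumption~\ref{ass3}) and $v=v(k,x^*)>0$ since $k>0$. Writing $a_r$ through the representation of Corollary~\ref{cor:cor3},
\[
a_r(v,x^*)=\sum_{S_r}K_{S_r}\,\frac{v_{j_1}\cdots v_{j_r}}{x_{i_1}\cdots x_{i_r}},
\]
each monomial is strictly positive, so a sum of nonnegative contributions could never be negative; consequently some fragment weight $K_{S_r}$ must be negative. That is, $a_r(v,x)$ contains a negative term, namely a critical fragment of order $r$, as claimed.

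I expect the only genuine subtlety to be bookkeeping rather than mathematics: one must verify that the degree formula of Lemma~\ref{lem:lem5} is applicable, which requires that $\partial\Omega_\eta$ be free of equilibria and that all interior equilibria be regular — both of which are supplied by the standing assumptions and by the corollaries already established. Once the degree identity is in hand, the pigeonhole step and the positivity of the monomials close the argument with no further estimates.
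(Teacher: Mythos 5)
Your proof is correct and is essentially the paper's argument: the paper proceeds by contradiction---if every term of $a_r(v,x)$ were positive then $a_r>0$ pointwise, and Corollary~\ref{cor:cor2} (itself just the degree identity of Corollary~\ref{coro:deg_gxi} and Lemma~\ref{lem:lem5}) would force a unique equilibrium---whereas you unfold that same degree computation directly, using the pigeonhole step to locate an equilibrium where $a_r<0$ and then the positivity of the monomials in Corollary~\ref{cor:cor3} to extract a negative fragment weight. The two versions are logically equivalent rewrappings of the same degree-theoretic mechanism, with yours making explicit the sign-extraction step that the paper handles in the contrapositive direction.
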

\begin{proof}
  Suppose not, i.e., $a_r(v,x)$ contains only positive terms. Then
  $a_r(v,x)>0$ for any 
    $k>0$ and $\xi\in\Omega_\eta$  (recall that $v\equiv v(k,x)$ and $x\equiv x(\eta,\xi)$). 
  This is in contradiction with Corollary~\ref{cor:cor2}. Therefore
  $a_r(v,x)$ contains at least one negative term.
\end{proof}

\begin{corollary}\label{cor:cor4}
    In the setting of Theorem~\ref{thm2}, 
  if the system $g_{\eta} (k,\xi)=0$, $\xi\in\Omega_\eta$ has multiple
    solutions,
  then the bipartite  graph  of the  conservative biochemical
  mechanism  (\ref{cr}) contains a critical fragment  $S_{r}$,  where
  $r$ is the rank of the stoichiometric matrix $N$.  
\end{corollary}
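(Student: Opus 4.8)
The plan is to chain together Theorem~\ref{thm2} with the graph-theoretic expansion of $a_r$ from Corollary~\ref{cor:cor3}, reducing the combinatorial conclusion to a statement about the sign of a single fragment weight. First I would invoke Theorem~\ref{thm2}: since we are in its setting and $g_\eta(k,\xi)=0$ has multiple (regular) solutions in $\Omega_\eta$ under Assumption~\ref{ass3}, the coefficient $a_r(v,x)$ must contain a negative term. This analytic fact is the only nontrivial input; everything that follows is a translation into graph language.

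Next I would recall from Corollary~\ref{cor:cor3} the representation
\begin{displaymath}
  a_r(v,x) = \sum_{S_r \binom{i_1,\ldots,i_r}{j_1,\ldots,j_r}} K_{S_r}\,
  \frac{v_{j_1}\cdots v_{j_r}}{x_{i_1}\cdots x_{i_r}},
\end{displaymath}
where the sum ranges over all fragments of order $r$ in the bipartite graph $G$. The decisive observation is that each monomial $v_{j_1}\cdots v_{j_r}/(x_{i_1}\cdots x_{i_r})$ is strictly positive at any positive equilibrium, since there $v\equiv v(k,x)>0$ and $x>0$. Consequently the sign of each term in the sum is governed entirely by the sign of its fragment weight $K_{S_r}$.

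I would then use the one-to-one correspondence between fragments $S_r$ and the non-zero terms of $a_r$, established immediately after Theorem~\ref{thm1}. Because of this bijection, a negative term of $a_r$ comes from a single fragment $S_r$ whose weight must satisfy $K_{S_r}<0$; by definition such a fragment is a critical fragment. Combining this with the first step, the negative term furnished by Theorem~\ref{thm2} pins down a critical fragment of order $r$ in $G$, which is exactly the assertion of the corollary.

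I do not anticipate a genuine obstacle here, as this is a direct reformulation rather than a new argument. The one point that warrants care is the sign transfer in the second paragraph: one must verify that the positivity of the monomials holds for the $(v,x)$ realized at the equilibria in question, so that the negativity of a term in $a_r$ faithfully reflects the negativity of the corresponding $K_{S_r}$ rather than a cancellation among monomials of mixed sign. This is guaranteed by positivity of the rate functions and concentrations, together with the fact that the representation in Corollary~\ref{cor:cor3} has already combined like monomials into the single weight $K_{S_r}$ per fragment.
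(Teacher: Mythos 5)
Your proposal is correct and takes essentially the same route as the paper, whose own proof is a one-line citation of Theorem~\ref{thm2}, Corollary~\ref{cor:cor3} and the one-to-one correspondence between negative terms of $a_r(v,x)$ and critical fragments. Your extra care about sign transfer---that the monomials $v_{j_1}\cdots v_{j_r}/(x_{i_1}\cdots x_{i_r})$ are positive for $v>0$, $x>0$, so a term is negative precisely when $K_{S_r}<0$---simply makes explicit why that correspondence carries the negativity from the analytic statement to the graph-theoretic one.
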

\begin{proof}
This follows by Theorem~\ref{thm2}, Corollary~\ref{cor:cor3} and the one-to-one correspondence between a negative term in $a_r(v,x)$ and a critical fragment.
\end{proof}

\textbf{Example.} The biochemical mechanism  (\ref{eq:br_cyc}) is conservative since its concentrations satisfy the conservation relation 
$x_1+x_2 +x_3=c_0$.   The system  (\ref{eq:de_cyc}) has no boundary equilibria since  $(0,0,0)$   does not satisfy the conservation relation $x_1+x_2 +x_3=c_0$. 
Thus   the  graph-theoretic condition developed here applies.

The existence of multistationarity requires a critical fragment of order equal to the rank of the stoichiometric matrix  by 
Corollary~\ref{cor:cor4}. Two critical fragments of order two, the rank of the stoichiometric matrix of (\ref{eq:br_cyc}), exist in  the bipartite graph of  the mechanism (\ref{eq:br_cyc}) shown in  Figure~\ref{fig:br_cyc}. The first critical fragment $S_2 \binom{1,2}{5,1}$ is shown in Figure~\ref{fig:cf_br_cyc}.  The second critical fragment $S_2 \binom{1,3}{3,1}$ is similar in structure - $S_2 \binom{1,3}{3,1}$ contains a subgraph which is a positive cycle of order 2,  $g_1=C \binom{A_1,A_3}{B_3,B_1}$ and a subgraph of edges $g_2 =\{  [A_1,B_3],[A_3,B_1]\}$.
Therefore, the existence of  multiple (always an odd number) regular  equilibria of (\ref{eq:de_cyc}) in the interior of a level set $x_1+x_2 +x_3=c_0$ for some $c_0$ is possible,  for some  values of the rate constants $k$.

\textbf{Example.}
The MAPK  network 
belongs to a family of biochemical networks known as  MAPK cascades that have 
been  extensively studied in recent years  \cite{cfr08,markevich2004signaling, fein-012}. 
 A mass action kinetics  MAPK model with a single layer is studied in \cite{fein-043}.  We use the proposed here graph-theoretic  method  to analyze  the MAPK  network for multistationarity.  We  find critical fragments  in the bipartite graph of the  MAPK network,  that are responsible for the already discovered multistationarity in \cite{fein-043,ConradiMincheva2014}.

We will   use $A$  for either a MAPKK or a MAPK, $E_1$ for
mono-phosphorylated MAPKKK or double-phosphorylated MAPKK and $E_2$
for MAPKK'ase or MAPK'ase. The biochemical mechanism involves the species $A$, $A_p$, $A_{pp}$, $E_1$, $E_2$,
$A\, E_1$, $A_p\, E_1$, $A_{pp}\, E_2$, and $A_p\, E_2$ and the 12 elementary reactions
\begin{align}\label{mapk}
\begin{split}
A + E_1 &\xrightleftharpoons[k_{2}]{k_{1}} AE_1 \xrightarrow{k_3} A_p +E_1\xrightleftharpoons[k_{5}]{k_{4}} A_p E_1\xrightarrow{k_6} A_{pp} +E_1    \\     
A _{pp}+ E_2 &\xrightleftharpoons[k_{8}]{k_{7}} A_{pp}E_2 \xrightarrow{k_9} A_p +E_2\xrightleftharpoons[k_{11}]{k_{10}} A_p E_2\xrightarrow{k_{12}} A +E_2. 
\end{split}
\end{align}

Let each species in (\ref{mapk})
be   associated with a continuously differentiable variable representing its
concentration. The concentration variables are  chosen as follows:  $x_1$ for $A$, $x_2$
for $E_1$, $x_3$ for $A\, E_1$, $x_4$ for $A_p$, $x_5$ for $A_p\,
E_1$, $x_6$ for $A_{pp}$, $x_7$ for $E_2$, $x_8$ for $A_{pp}\, E_2$
and $x_9$ for $A_p\, E_2$. The  following system of ordinary differential equations  is obtained  as a model of  
  (\ref{mapk}) with mass action kinetics
\allowdisplaybreaks{
\begin{subequations}\label{mapk-sys}
 \begin{align}
   \label{eq:ode_dis_dis_1}
   \dot{x}_1 & = -k_{1}\, x_{1}\, x_{2} + k_{2}\, x_{3} + k_{12}\,
   x_{9}        \\ 
   \dot{x}_2 &= -k_{1}\, x_{1}\, x_{2} + (k_{2}+k_{3})\, x_{3}
   -k_{4}\,  x_{2}\, x_{4} + (k_{5}+k_{6})\, x_{5}          \\
   \dot{x}_3 &= k_{1}\, x_{1}\, x_{2} - (k_{2}+k_{3})\, x_{3}      \\    
   \dot{x}_4 &= k_{3}\, x_{3} -k_{4}\, x_{2}\, x_{4} + k_{5}\,
   x_{5} + k_{9}\, x_{8} -k_{10}\, x_{4}\, x_{7} + k_{11}\, x_{9}  \\        
   \dot{x}_5 &= k_{4}\, x_{2}\, x_{4} - (k_{5}+k_{6})\, x_{5}          \\
   \dot{x}_6 &= k_{6}\, x_{5}-k_{7}\, x_{6}\, x_{7}+k_{8}\, x_{8}    \\      
   \dot{x}_7 &= -k_{7}\, x_{6}\, x_{7} + (k_{8}+k_{9})\,
   x_{8} - k_{10}\, x_{4}\, x_{7} + (k_{11}+k_{12})\, x_{9}            \\
   \dot{x}_8 &= k_{7}\, x_{6}\, x_{7} - (k_{8}+k_{9})\, x_{8}          \\
   \dot{x}_9 &= k_{10}\, x_{4}\, x_{7} - (k_{11}+k_{12})\, x_{9}   \label{eq:ode_dis_dis_9}
 \end{align}
\end{subequations}}

 Since the total concentrations of $E_1$, $E_2$ and $A$ are constant, three 
 conservation relations exist
\begin{subequations}\label{massc}
 \begin{align}
   \label{eq:con_rel_dis_dis_1}
   x_2+x_3+x_5 &= c_1,          \\
   x_7+x_8+x_9 &= c_2          \\
   \label{eq:con_rel_dis_dis_3}
   x_1+x_3+x_4+x_5+x_6+x_8+x_9 &= c_3.
 \end{align}
\end{subequations}
where each $c_i >0$, $i=1,2,3$. Thus (\ref{massc}) can be written as $\widetilde{W}^T x =c_0$ where
 \begin{equation}
   \widetilde{W}^T = 
   \left[
     \begin{array}{rrrrrrrrr}
       0&1&1&0&1&0&0&0&0 \\
       0&0&0&0&0&0&1&1&1 \\
       1&0&1&1&1&1&0&1&1 \\
            \end{array}
   \right] .
 \end{equation}
Correspondingly the level set  of the system  (\ref{mapk-sys}) is   
\begin{equation}\label{eq:tilde_w_c0}
\widetilde{\omega}_{c_0} = \{ x \geq 0 \; | \; \widetilde{W}^T x =c_0 \} . 
\end{equation}

Since  the MAPK network (\ref{mapk}) is conservative by (\ref{massc}), the theory developed here  applies  to it,
provided that the model system (\ref{mapk-sys}) does not  have any boundary equilibrium (equilibrium with at least one   zero coordinate) 
in $\tilde{\omega}_{c_0}$. The following lemma is part of Lemma~3.1 in the Supporting information of \cite{ConradiMincheva2014}.
 \begin{lemma}\label{lem0}
 The set $\widetilde{\omega}_{c_0}$  contains no boundary equilibria of  the  system (\ref{mapk-sys}).  
 \end{lemma}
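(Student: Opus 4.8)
The plan is to argue by contradiction. Suppose $x^\ast\ge 0$ is an equilibrium of (\ref{mapk-sys}) lying in $\widetilde{\omega}_{c_0}$ with $c_0>0$, and suppose that at least one coordinate of $x^\ast$ vanishes. The first step is to reduce the nine equilibrium equations to a small, tractable system. Setting $\dot x_3=\dot x_5=\dot x_8=\dot x_9=0$ solves the four ``complex'' concentrations as positive multiples of products of the remaining variables, namely $x_3=\tfrac{k_1}{k_2+k_3}x_1x_2$, $x_5=\tfrac{k_4}{k_5+k_6}x_2x_4$, $x_8=\tfrac{k_7}{k_8+k_9}x_6x_7$ and $x_9=\tfrac{k_{10}}{k_{11}+k_{12}}x_4x_7$. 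Substituting these into $\dot x_1=0$ and $\dot x_6=0$ and cancelling common terms yields the two scalar identities $k_3x_3=k_{12}x_9$ and $k_6x_5=k_9x_8$; one checks that the equations $\dot x_2=0$, $\dot x_4=0$ and $\dot x_7=0$ then hold automatically, so these relations capture the full equilibrium condition.

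The second step uses the strict positivity of the conservation constants. Since $x_3$ and $x_5$ are multiples of $x_2$, if $x_2=0$ then $x_3=x_5=0$ and the first relation in (\ref{massc}) forces $c_1=0$, a contradiction; hence $x_2>0$. Symmetrically, since $x_8$ and $x_9$ are multiples of $x_7$, the second relation in (\ref{massc}) forces $x_7>0$. These two facts are precisely what make the remaining divisions legitimate.

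The third step is a cascade of equivalences, all obtained by reading the product relations with $x_2>0$ and $x_7>0$ fixed and invoking the two scalar identities. One gets $x_1=0\iff x_3=0$, $x_3=0\iff x_9=0$, $x_9=0\iff x_4=0$, $x_4=0\iff x_5=0$, $x_5=0\iff x_8=0$ and $x_8=0\iff x_6=0$. Chaining these, the vanishing of any one of $x_1,x_3,x_4,x_5,x_6,x_8,x_9$ forces all seven to vanish; but then the third relation in (\ref{massc}) gives $c_3=0$, contradicting $c_3>0$. Together with $x_2>0$ and $x_7>0$, this shows every coordinate of $x^\ast$ is positive, so $\widetilde{\omega}_{c_0}$ contains no boundary equilibria.

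I do not expect a serious obstacle here: the argument is essentially bookkeeping dictated by the cascade structure of the network. The one point that requires care is the \emph{order} of the steps. The two auxiliary identities $k_3x_3=k_{12}x_9$ and $k_6x_5=k_9x_8$, together with the positivity of the enzyme species $x_2=E_1$ and $x_7=E_2$, must be established before any product relation is ``divided through,'' since the entire chain of equivalences collapses if $x_2$ or $x_7$ were allowed to be zero.
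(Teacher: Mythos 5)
Your proof is correct, and it is worth noting that the paper itself gives no argument for this lemma: it simply defers to Lemma~3.1 in the Supporting Information of \cite{ConradiMincheva2014}, so your self-contained derivation fills in a proof rather than paralleling an in-text one. I verified each step. The equations $\dot x_3=\dot x_5=\dot x_8=\dot x_9=0$ in (\ref{mapk-sys}) do yield the four stated product formulas; substituting into $\dot x_1=0$ and $\dot x_6=0$ collapses them to $k_3x_3=k_{12}x_9$ and $k_6x_5=k_9x_8$; and the redundancy claim holds because $\dot x_2=-\dot x_3-\dot x_5$, $\dot x_7=-\dot x_8-\dot x_9$, while $\dot x_4$ reduces, after using the product formulas, to $(k_3x_3-k_{12}x_9)+(k_9x_8-k_6x_5)=0$. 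The deductions $x_2>0$ and $x_7>0$ from $c_1>0$ and $c_2>0$ are legitimate since the paper assumes $c_i>0$, $i=1,2,3$, immediately after (\ref{massc}); and your six equivalences correctly chain the vanishing of any one of $x_1,x_3,x_4,x_5,x_6,x_8,x_9$ to the vanishing of all seven, contradicting $c_3>0$ via the third conservation relation. Your emphasis on establishing $x_2>0$, $x_7>0$ \emph{before} dividing through the product relations is exactly the right point of care; the only small refinement is that the equivalences $x_3=0\iff x_9=0$ and $x_5=0\iff x_8=0$ need only positivity of the rate constants, while enzyme positivity is required only for the four product-relation equivalences.
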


The  bipartite graph of the MAPK network  (\ref{mapk}) is  shown in Figure \ref{fig:single_mapk_graph}. 
\begin{figure}[h!]
  \centering  
  \includegraphics[width=\textwidth]{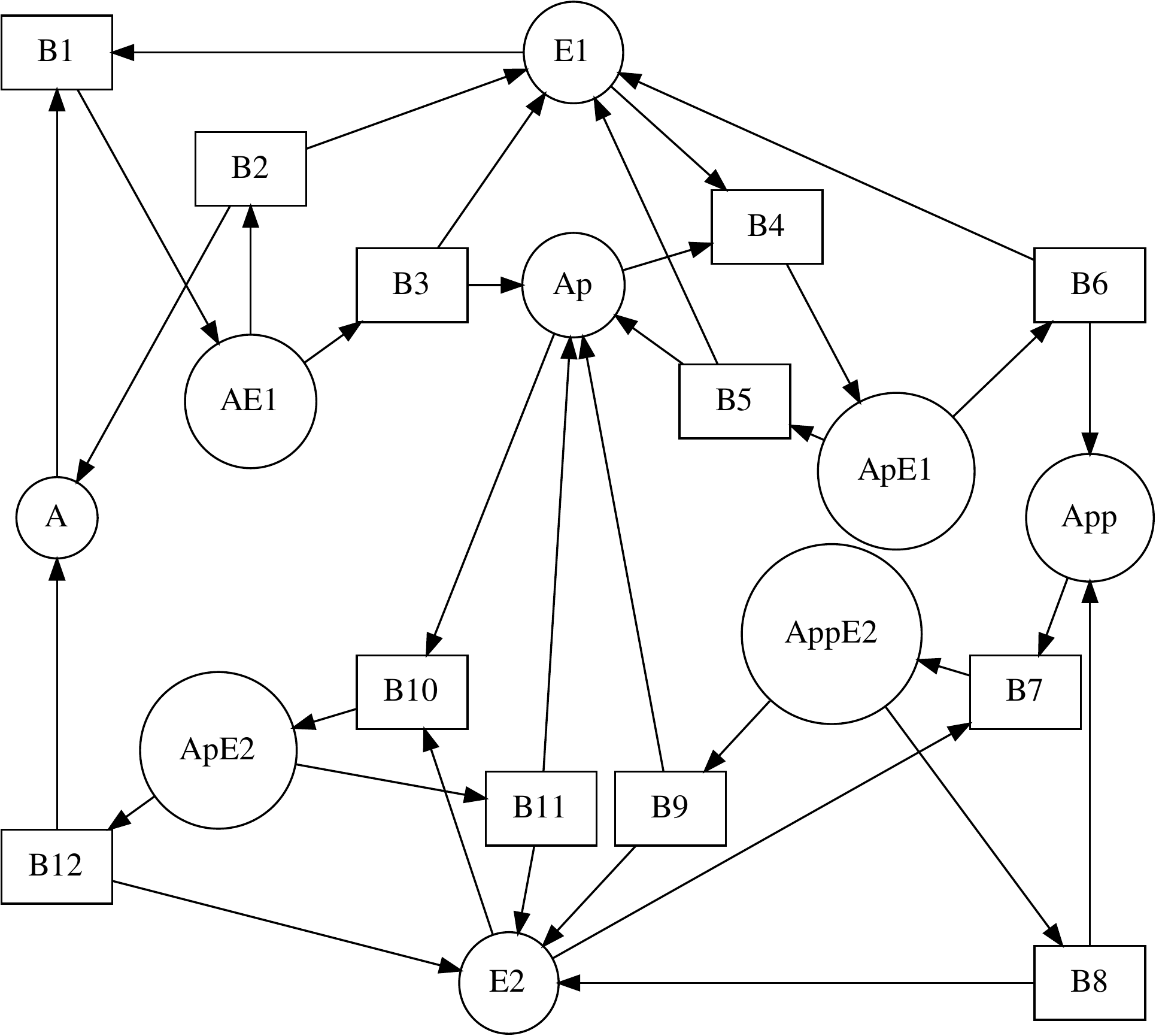}
  \caption{Bipartite digraph of the  single-layer MAPK network. Reproduced from \cite{WaltherHMincheva2014} under Open Access License Agreement.}
  \label{fig:single_mapk_graph}
\end{figure}

 The  necessary condition for multistationarity  requires  the existence of 
a critical fragment of order equal to the rank of the stoichiometric matrix by Corollary~\ref{cor:cor4}. Since the rank of the stoichiometric matrix for  the MAPK network (\ref{mapk}) equals $6$, using the package GraTeLPy   we  have enumerate all critical fragments of  order 6 in  \cite{WaltherHMincheva2014}. The 9 critical fragments of order 6 of the MAPK network  are shown in Figure \ref{fig:single_mapk_fragments_1} and Figure \ref{fig:single_mapk_fragments_2}. Therefore, the existence of  multiple (an odd number) regular  equilibria of the system (\ref{mapk-sys}) in the interior of a level set  (\ref{eq:tilde_w_c0})  for some $c_0$ is possible for some  values of the rate constants $k$.  In fact we show in \cite{ConradiMincheva2014} that for some given values of  the rate constants $k$,    three equilibria in (\ref{eq:con_rel_dis_dis_1})--(\ref{eq:con_rel_dis_dis_3})  exist where $c_0$ is chosen based on Corollary~1 in   \cite{ConradiMincheva2014}.

\begin{figure}[h!]
  \centering
  \includegraphics[width=0.9\textwidth]{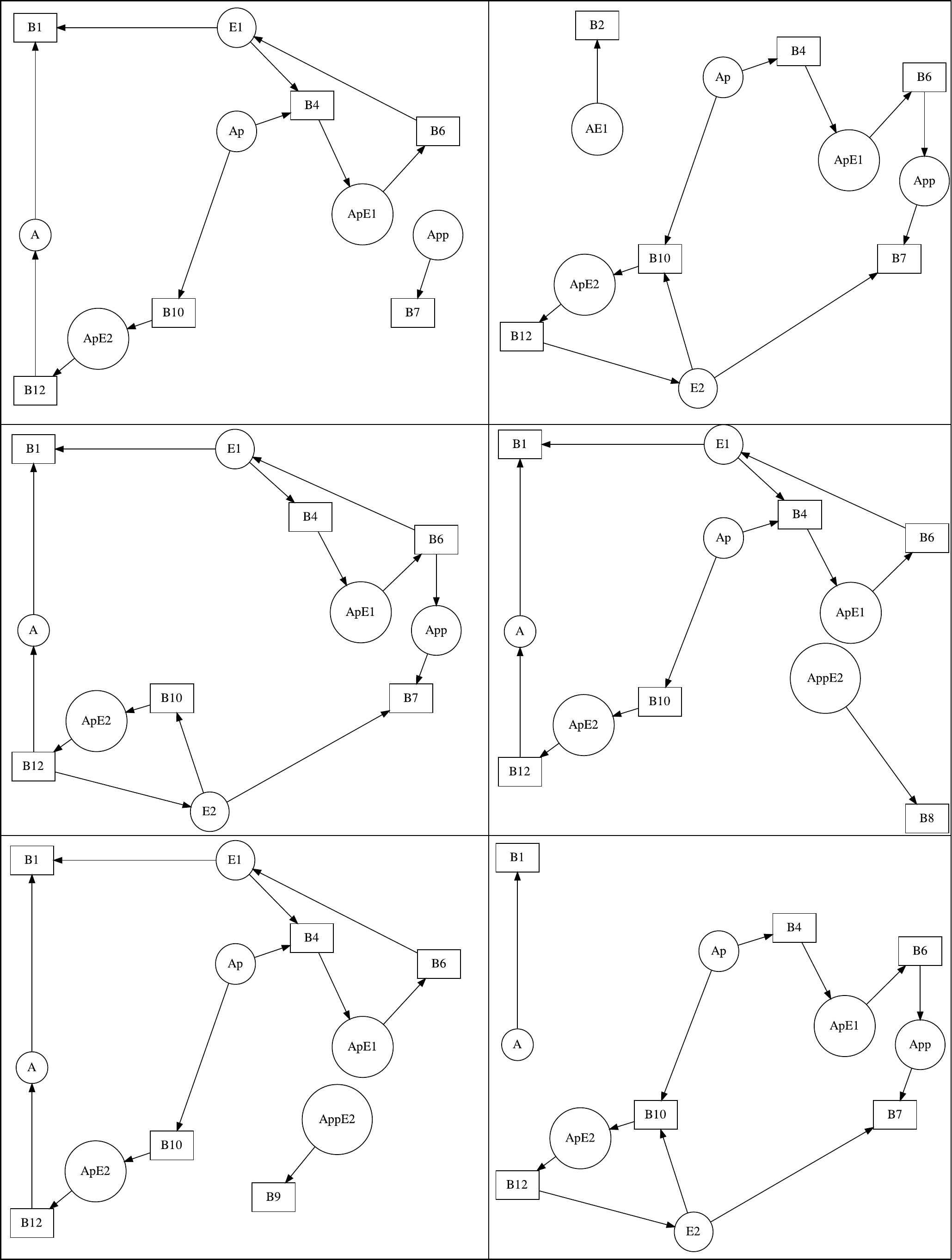}
  \caption{Critical fragments of the  single-layer MAPK network found by GraTeLPy. Reproduced from \cite{WaltherHMincheva2014} under Open Access License Agreement.}
  \label{fig:single_mapk_fragments_1}
\end{figure}

\begin{figure}[h!]
  \centering
  \includegraphics[width=\textwidth]{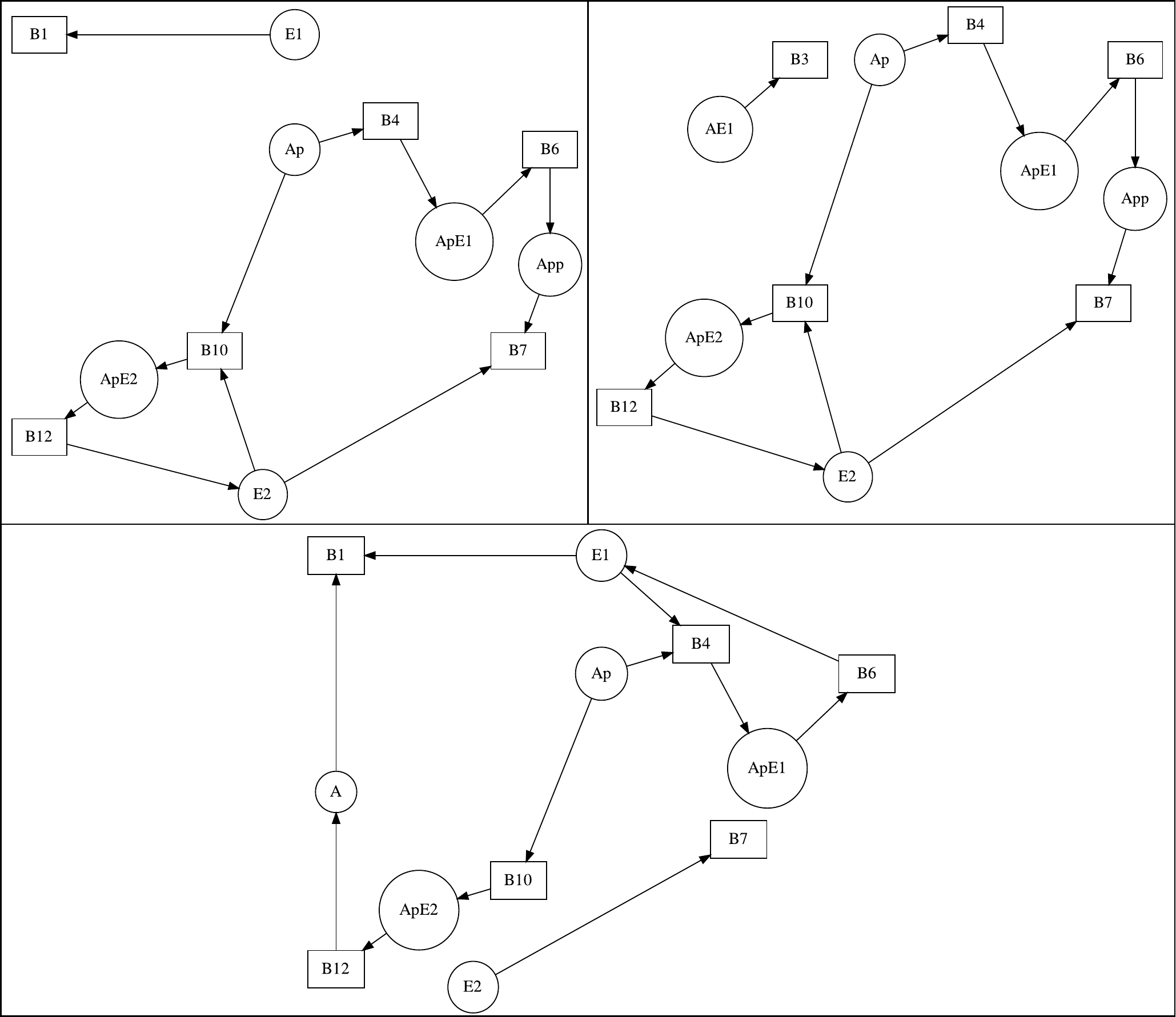}
  \caption{Critical fragments of the single-layer MAPK network found by GraTeLPy. Reproduced from \cite{WaltherHMincheva2014} under Open Access License Agreement.}
  \label{fig:single_mapk_fragments_2}
\end{figure}

\section{Discussion}
We have studied conservative biochemical mechanisms characterized by bounded concentrations of all their species. 
We have obtained a graph-theoretic condition for multistationarity for conservative biochemical mechanisms  with mass action kinetics. In essence the graph-theoretic condition is the same as for non-conservative biochemical mechanisms - the existence of a critical fragment 
of order $r$, the rank of the stoichiometric matrix,  is required for multistationarity. The difference between the case of conservative and non-conservative biochemical mechanisms is that,  in the first case we can  apply degree theory arguments \cite{Deimling2010} and in the second only  bifurcation theory can be applied. In the case of conservative biochemical mechanisms the existence of a positive equilibrium in the  level set $\omega_{c_0}$  (\ref{eq:def_om}) is always guaranteed. If multiple regular  equilibria  exist (see Sec. \ref{sec:degree_g}) in some level set  $\omega_{c_0}$ for some values of the rate constants, then the number of equilibria is  always odd. 

For large mechanisms with many species and reactions  the package GraTeLPy \cite{WaltherHMincheva2014} can be used to search for critical fragments (of order  equal to the rank of the stoichiometric matrix),  that are necessary for multistationarity.

Other related graph-theoretic conditions for  multistationarity  have been developed recently. 
In the work of Craciun and Feinberg the undirected   species-reaction (SR) graph is used and a  graph-theoretic condition that precludes multistationarity in open system  mass-action kinetics models   for any parameter values  is obtained  \cite{Craciun2006}. Banaji and Craciun  obtain graph-theoretic conditions  for injectivity and uniqueness of  equilibria regardless of parameter values  in chemical kinetics models  using the SR graph \cite{BanajiCraciun 2010}. In an earlier work the same authors use a signed, directed, labeled, bipartite multigraph, termed the ``DSR graph'' to obtain a graph-theoretic condition 
that rules out multiple equilibria of general interaction  networks models \cite{BanajiCraciun2009}. 

 In  \cite{hofbauer1990} degree theory is used to study the  number of equilibria   of ecological differential equations where $\dot{x}_i =x_i f_i(x)$ for all $i$.
Degree theory methods have also been used in \cite{CraciunHelton2008} to  determine the number of equilibria for complex biochemical reaction networks. Degree theory arguments are used to find parameter values (rate constants and total concentrations) such that  the MAPK model (\ref{mapk-sys})   has three equilibria or a single equilibria in  \cite{ConradiMincheva2014}.

\bibliographystyle{plain}
\bibliography{graph-degree}

\end{document}